\providecommand{\U}[1]{\protect\rule{.1in}{.1in}}
\definecolor{mypink}{rgb}{0.858, 0.188, 0.478}
\newtheorem{theorem}{Theorem}
\newtheorem{corollary}{Corollary}
\newtheorem{definition}{Definition}
\newtheorem{lemma}{Lemma}
\newtheorem{proposition}{Proposition}
\newtheorem{remark}{Remark}
\newenvironment{proof}[1][Proof]{\noindent\textbf{#1.} }{\ \rule{0.5em}{0.5em}}
\begin{document}
\title{$\alpha$-Logarithmic negativity}
\author{Xin Wang}
\affiliation{Institute for Quantum Computing, Baidu Research, Beijing 100193, China}
\affiliation{Joint Center for Quantum Information and Computer Science, University of
Maryland, College Park, Maryland 20742, USA}
\email{wangxin73@baidu.com}

\author{Mark M.\ Wilde}
\email{mwilde@lsu.edu}
\affiliation{Hearne Institute for Theoretical Physics, Department of Physics and Astronomy,
and Center for Computation and Technology, Louisiana State University, Baton
Rouge, Louisiana 70803, USA}

\begin{abstract}
The logarithmic negativity of a bipartite quantum state is a widely employed
entanglement measure in quantum information theory, due to the fact that it is
easy to compute and serves as an upper bound on distillable entanglement. More
recently, the $\kappa$-entanglement of a bipartite state was shown to be the
first entanglement measure that is both easily computable and has a precise information-theoretic meaning, being equal to the exact entanglement cost of a bipartite quantum
state when the free operations are those that completely preserve the
positivity of the partial transpose [Wang and Wilde, Phys.~Rev.~Lett.~\textbf{125}(4):040502, July 2020]. In this paper, we provide a non-trivial
link between these two entanglement measures, by showing that they are the
extremes of an ordered family of $\alpha$-logarithmic negativity entanglement
measures, each of which is identified by a parameter $\alpha\in\left[
1,\infty\right]  $. In this family, the original logarithmic negativity is
recovered as the smallest with $\alpha=1$, and the $\kappa$-entanglement is
recovered as the largest with $\alpha=\infty$. We prove that the $\alpha
$-logarithmic negativity satisfies the following properties:\ 
entanglement monotone, normalization, faithfulness, and subadditivity. We also
prove that it is neither convex nor monogamous. Finally, we define the
$\alpha$-logarithmic negativity of a quantum channel as a generalization of
the notion for quantum states, and we show how to generalize many of the
concepts to arbitrary resource theories.

\end{abstract}
\date{\today}
\maketitle
\volumeyear{ }
\volumenumber{ }
\issuenumber{ }
\eid{ }
\startpage{1}
\endpage{102}

\section{Introduction}
Establishing a quantitative theory of entanglement has long been one of the
central concerns of quantum information theory \cite{H12,H06book,W17,Wat16}.
Starting with the original developments in \cite{BDSW96}, there now exists a
large body of knowledge on this topic \cite{H42007,Plenio2007}.

The traditional approaches to quantifying entanglement are the axiomatic
approach and the operational (information-theoretic or resource-theoretic)
approach. In the axiomatic approach, one identifies a list of desirable
properties that a measure of entanglement should possess. Most prominent among
these is that a measure of entanglement should not increase under the action
of a quantum channel realized by local operations and classical communication
(LOCC); if a measure of entanglement satisfies this property, it is called an
entanglement monotone. In the operational approach, one identifies a
meaningful information-processing task involving entanglement as a resource,
as well as some set of physical operations that are allowed for free. For
example, one could fix the free operations to be LOCC \cite{BDSW96}, separable
operations \cite{R97,VP98,PhysRevA.59.1070}, or operations that completely
preserve the positivity of the partial transpose (C-PPT-P operations)
\cite{R99}. Examples of information-processing tasks include entanglement
distillation \cite{BDSW96}, for which the goal is to use many copies of a
quantum state and free operations to produce as many high quality Bell states
as possible. In the opposite task, known as entanglement dilution
\cite{BDSW96,HHT01}, the goal is to use as few Bell states as possible, along
with LOCC, to produce as many high-fidelity copies of a quantum state as possible.

One of the most well known measures of entanglement is the logarithmic
negativity \cite{Vidal2002,Plenio2005b}, defined for a bipartite state
$\rho_{AB}$ as%
\begin{equation}
E_{N}(\rho_{AB})\equiv\log_{2}\left\Vert T_{B}(\rho_{AB})\right\Vert _{1},
\label{eq:log-neg}%
\end{equation}
where $T_{B}$ is the partial transpose map \cite{paulsen_2003}, defined for an
orthonormal basis $\{|i\rangle_{B}\}_{i}$ as%
\begin{equation}
T_{B}(Y_{AB})\equiv\sum_{i,j}\left(  I_{A}\otimes|i\rangle\langle
j|_{B}\right)  Y_{AB}\left(  I_{A}\otimes|i\rangle\langle j|_{B}\right)  ,
\label{eq:partial-transpose}%
\end{equation}
and $\left\Vert X\right\Vert _{1}\equiv\operatorname{Tr}[\sqrt{X^{\dag}X}]$
denotes the trace norm of an operator $X$. The logarithmic negativity
satisfies a number of properties, the most important of which is that it is an
entanglement monotone \cite{Vidal2002,Plenio2005b}. The widespread use of the
logarithmic negativity is due to the ease with which it can be computed, and
the fact that it provides an upper bound on the distillable entanglement of a
bipartite state \cite{HHH00,Vidal2002}. In this context, it should be
mentioned that an entanglement measure alternative to the logarithmic
negativity was recently proposed in \cite{WD16pra} (also known as the
max-Rains relative entropy \cite{WFD18}):\ it is easy to compute via
semi-definite programming, it satisfies all of the properties that the
logarithmic negativity does, and it provides a generally tighter upper bound
on distillable entanglement.

Another entanglement measure proposed in recent work is the $\kappa
$-entanglement of a quantum state \cite{WW18,WW20}, defined as follows:%
\begin{multline}
E_{\kappa}(\rho_{AB})\equiv\log_{2}\inf\{\operatorname{Tr}[S_{AB}]:\\
-T_{B}(S_{AB})\leq T_{B}(\rho_{AB})\leq T_{B}(S_{AB}),\ S_{AB}\geq0\}.
\end{multline}
The $\kappa$-entanglement can be computed easily by semi-definite programming
\cite{WW18,WW20}. In the context of the resource theory of entanglement, the
$\kappa$-entanglement measure can be regarded as a breakthrough:\ not only is
it easily computable and satisfies a number of desirable properties, but it
also has a precise information-theoretic interpretation as the exact entanglement cost of
a bipartite state $\rho_{AB}$ when C-PPT-P operations are allowed for free
(see \cite{WW18,WW20} for details and see \cite{APE03,MW08}\ for earlier work on this information-processing task).
No other entanglement measure is known to have these properties, which makes
$\kappa$-entanglement desirable from both the axiomatic and information-theoretic perspectives.

It is known that the $\kappa$-entanglement and logarithmic negativity coincide
for two-qubit states and for bosonic Gaussian states \cite{WW18,WW20}. This
reduction raises the question of whether there might be a deeper connection
between the two measures. \textit{In this paper, we show that this is indeed
the case, by defining a whole family of ordered entanglement measures that
interpolate between the logarithmic negativity, the smallest in the family,
and the }$\kappa$\textit{-entanglement, the largest in the family}. We call
each member of the family the $\alpha$-logarithmic negativity, where $\alpha$
is a parameter in the interval $\left[  1,\infty\right]  $. The $\alpha
$-logarithmic negativities are ordered, and the usual logarithmic negativity
is recovered when $\alpha=1$, whereas the $\kappa$-entanglement is recovered
when $\alpha=\infty$. In this sense, and following earlier traditions in
quantum information theory \cite{Renner2005,D09,Dat09,Tomamichel2015b}, we can
refer to the $\kappa$-entanglement alternatively as the
\textit{max-logarithmic negativity}.

Our results in this paper are related to the precedent of R\'enyi \cite{R61},
who established an interpolating family of entropic measures based on a
parameter$~\alpha$. However, it should be noted that we cannot interpret
$\alpha$-logarithmic negativity in terms of the traditional definitions of
quantum R\'enyi relative entropies \cite{P86,Mueller2013,WWY13}, for two reasons:

\begin{enumerate}
\item The normalization prefactor for $\alpha$-logarithmic negativity\ is
different from that employed for quantum R\'enyi relative entropies.

\item More importantly, the quantum R\'enyi relative entropies are functions
of quantum states, or more generally positive semi-definite operators, whereas
the $\alpha$-logarithmic negativity is based on an information measure that is
a function of a (not necessarily positive semi-definite) Hermitian operator.
\end{enumerate}

\noindent The second reason given above for the difference between $\alpha
$-logarithmic negativity and R\'enyi relative entropy is already apparent in
the definition of the logarithmic negativity in \eqref{eq:log-neg}, since the
argument is $T_{B}(\rho_{AB})$, which is a (not necessarily positive
semi-definite) Hermitian operator. Nevertheless, we discuss connections
between the $\alpha$-logarithmic negativity and sandwiched R\'enyi relative
entropy \cite{Mueller2013,WWY13} in the case that one allows for the first
argument of the sandwiched R\'enyi relative entropy to be a Hermitian operator.

We remark here that the technical development in our paper heavily relies on
results from \cite{B13monotone}\ and \cite{H16}. In particular, our proofs
that the $\alpha$-logarithmic negativities are ordered and are 
entanglement monotones strongly rely on the methods of \cite{B13monotone},
which therein were only applied to quantum states or positive semi-definite
operators. Here, we simply observe that the methods of \cite{B13monotone}%
\ apply when one of the operators is Hermitian, not unlike how the authors of
\cite{MHR2017}\ observed that the methods of \cite{B13monotone}\ apply more
generally to positive trace-preserving maps (rather than just the strict
subset of completely positive trace-preserving maps). Also, in order to
establish that the $\alpha$-logarithmic negativity can be computed by convex
optimization, we rely on a general theorem established in \cite{H16}. Thus,
given the benefits of the methods of \cite{B13monotone}\ and \cite{H16} for
establishing entanglement measures in quantum information, it seems fruitful
to continue the mathematical physics directions considered in
\cite{B13monotone} and \cite{H16} for future work, as has already been done in
several papers \cite{CFL16,CFL18,BST18,J18}.

We note here that other papers \cite{Calabrese2012,Lee2013,Calabrese_2014}
have pursued various generalizations of logarithmic negativity, but their
status as entanglement monotones remains unclear.

In the rest of the paper, we provide a detailed exposition of the $\alpha
$-logarithmic negativity. In particular, we define it in
Section~\ref{sec:def-alpha-log-neg}. In Section~\ref{sec:props-mu-nu}, we
detail several properties of the quantities underlying $\alpha$-logarithmic
negativity, and in Section~\ref{sec:ordering-alpha-log-neg}, we prove that the
$\alpha$-logarithmic negativities are ordered (i.e., monotone increasing with
respect to $\alpha\in\left[  1,\infty\right]  $). Then in
Section~\ref{sec:alpha-log-neg-limits}, we prove that the logarithmic
negativity and the $\kappa$-entanglement are special cases of the $\alpha
$-logarithmic negativity when $\alpha=1$ and $\alpha=\infty$, respectively. In
Section~\ref{sec:alpha-log-neg-props}, we establish several properties of the
$\alpha$-logarithmic negativity, including normalization, faithfulness,
subadditivity, and that it is a entanglement monotone. We also prove that
it is neither convex nor monogamous. In Section~\ref{sec:generalizations}, we
define the $\alpha$-logarithmic negativity of a quantum channel as a
generalization of the measure for states. Therein, we also discuss other
generalizations of the $\alpha$-logarithmic negativity. We finally conclude in
Section~\ref{sec:conclusion}\ with a brief summary and some open questions.

\section{$\alpha$-logarithmic negativity}

\label{sec:def-alpha-log-neg}In this section, we define the $\alpha
$-logarithmic negativity of a bipartite state $\rho_{AB}$.

First, let us define the following quantities, which are functions of
$\alpha\geq1$, a Hermitian operator $X\neq0$, and a positive semi-definite
operator $\sigma\neq0$:%
\begin{align}
\mu_{\alpha}(X\Vert\sigma)  &  \equiv\left\{
\begin{array}
[c]{cl}%
\left\Vert \sigma^{\frac{1-\alpha}{2\alpha}}X\sigma^{\frac{1-\alpha}{2\alpha}%
}\right\Vert _{\alpha} & \text{if}\ \operatorname{supp}(X)\subseteq
\operatorname{supp}(\sigma)\\
+\infty & \text{else}%
\end{array}
\right.  ,\label{eq:mu-def}\\
\nu_{\alpha}(X\Vert\sigma)  &  \equiv\log_{2}\mu_{\alpha}(X\Vert\sigma),
\label{eq:nu-def}%
\end{align}
where the $\alpha$-norm of an operator $Y$ is defined for $\alpha\geq1$ as%
\begin{align}
\left\Vert Y\right\Vert _{\alpha}  &  \equiv\left(  \operatorname{Tr}%
[\left\vert Y\right\vert ^{\alpha}]\right)  ^{1/\alpha},\\
\left\vert Y\right\vert  &  \equiv\sqrt{Y^{\dag}Y},
\end{align}
and the inverse $\sigma^{\left(  1-\alpha\right)  /2\alpha}$ is understood in
the generalized sense (i.e., taken on the support of $\sigma$). The definition
in \eqref{eq:mu-def} is consistent with the following limit:%
\begin{equation}
\mu_{\alpha}(X\Vert\sigma)=\lim_{\varepsilon\searrow0}\mu_{\alpha}%
(X\Vert\left(  1-\varepsilon\right)  \sigma+\varepsilon\theta),
\label{eq:limit-mu-alpha-support}%
\end{equation}
where $\theta$ is a positive definite state. A proof for the equality in
\eqref{eq:limit-mu-alpha-support} follows the same steps given in the proof of
\cite[Lemma~13]{Mueller2013}.

We define%
\begin{align}
\mu_{\infty}(X\Vert\sigma)  &  \equiv\lim_{\alpha\rightarrow\infty}\mu
_{\alpha}(X\Vert\sigma)\\
&  =\left\Vert \sigma^{-1/2}X\sigma^{-1/2}\right\Vert _{\infty}\\
&  =\inf\left\{  \lambda:-\lambda\sigma\leq X\leq\lambda\sigma\right\}  ,
\end{align}
and%
\begin{equation}
\nu_{\infty}(X\Vert\sigma)\equiv\log_{2}\mu_{\infty}(X\Vert\sigma).
\label{eq:nu-infty}%
\end{equation}
Both of the above formulas are defined as above in the case that
$\operatorname{supp}(X)\subseteq\operatorname{supp}(\sigma)$, and $\mu
_{\infty}(X\Vert\sigma)$\ and $\nu_{\infty}(X\Vert\sigma)$ are set to
$+\infty$ otherwise.

The function $\nu_{\alpha}(X\Vert\sigma)$ is related to the sandwiched R\'enyi
relative entropy $\widetilde{D}_{\alpha}(X\Vert\sigma)$%
\ \cite{Mueller2013,WWY13} of a Hermitian operator $X\neq0$ and a positive
semi-definite operator $\sigma\neq0$ as follows:%
\begin{equation}
\widetilde{D}_{\alpha}(X\Vert\sigma)=\frac{\alpha}{\alpha-1}\nu_{\alpha
}(X\Vert\sigma). \label{eq:sandwiched-Renyi}%
\end{equation}
Also, we have that%
\begin{align}
\nu_{\infty}(X\Vert\sigma)  &  =D_{\max}(X\Vert\sigma
)\label{eq:d-max-hermitian}\\
&  \equiv\log_{2}\left\Vert \sigma^{-1/2}X\sigma^{-1/2}\right\Vert _{\infty}\\
&  =\log\inf\left\{  \lambda:-\lambda\sigma\leq X\leq\lambda\sigma\right\}  ,
\end{align}
where $D_{\max}(X\Vert\sigma)$ is the max-relative entropy \cite{D09}\ of a
Hermitian operator $X\neq0$ and a positive semi-definite operator $\sigma
\neq0$. Note that $D_{\max}(X\Vert\sigma)=+\infty$ if $\operatorname{supp}%
(X)\not \subseteq \operatorname{supp}(\sigma)$. We also note here that both
the sandwiched R\'enyi relative entropy and max-relative entropy have only
been considered in prior work when $X$ is positive semi-definite, and so
\eqref{eq:sandwiched-Renyi} and \eqref{eq:d-max-hermitian} represent strict
generalizations of the previously considered definitions.

We are now ready to define the $\alpha$-logarithmic negativity of a bipartite
state as follows:

\begin{definition}
[$\alpha$-logarithmic negativity]\label{def:alpha-log-neg}Let $\rho_{AB}$ be a
bipartite state. We define its $\alpha$-logarithmic negativity as%
\begin{equation}
E_{N}^{\alpha}(\rho_{AB})\equiv\inf_{\sigma_{AB}\in\operatorname{PPT}(A:B)}%
\nu_{\alpha}(T_{B}(\rho_{AB})\Vert\sigma_{AB}),
\end{equation}
where $T_{B}$ is the partial transpose map, defined in
\eqref{eq:partial-transpose}, and the set $\operatorname{PPT}(A:B)$ is the set
of positive partial transpose states, defined as
\cite{P96,Horodecki19961,PH97,HHH98}%
\begin{multline}
\operatorname{PPT}(A:B)\equiv\\
\left\{  \sigma_{AB}:\sigma_{AB}\geq 0,\ T_{B}(\sigma_{AB})\geq0,\ \operatorname{Tr}%
[\sigma_{AB}]=1\right\}  .
\end{multline}

\end{definition}

It can be useful as a proof tool to take the infimum in
Definition~\ref{def:alpha-log-neg} over only the faithful states in
$\operatorname{PPT}(A:B)$, and the following lemma establishes that this is possible:

\begin{lemma}
\label{lem:inf-invertible-states}The following identity holds%
\begin{equation}
E_{N}^{\alpha}(\rho_{AB})=\inf_{\sigma_{AB}\in\operatorname{PPT}%
_{\operatorname{inv}}(A:B)}\nu_{\alpha}(T_{B}(\rho_{AB})\Vert\sigma_{AB}),
\end{equation}
where $\operatorname{PPT}_{\operatorname{inv}}(A:B)$ consists of the faithful
states in $\operatorname{PPT}(A:B)$:%
\begin{multline}
\operatorname{PPT}_{\operatorname{inv}}(A:B)\equiv\\
\left\{  \sigma_{AB}:\sigma_{AB} > 0,\ T_{B}(\sigma_{AB})>0,\ \operatorname{Tr}%
[\sigma_{AB}]=1\right\}  .
\end{multline}

\end{lemma}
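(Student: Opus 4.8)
The plan is to prove the identity by establishing two inequalities. Since every faithful PPT state is in particular a PPT state, we have the inclusion $\operatorname{PPT}_{\operatorname{inv}}(A:B)\subseteq\operatorname{PPT}(A:B)$, and an infimum taken over a smaller set can only be larger. Thus the inequality
\begin{equation}
E_{N}^{\alpha}(\rho_{AB})\leq\inf_{\sigma_{AB}\in\operatorname{PPT}_{\operatorname{inv}}(A:B)}\nu_{\alpha}(T_{B}(\rho_{AB})\Vert\sigma_{AB})
\end{equation}
is immediate. The content of the lemma lies in the reverse inequality, which I would establish by showing that an arbitrary PPT state can be approximated by faithful PPT states that achieve the value of $\nu_{\alpha}$ in the limit.

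To this end, fix an arbitrary $\sigma_{AB}\in\operatorname{PPT}(A:B)$ and let $\pi_{AB}\equiv I_{AB}/(d_{A}d_{B})$ denote the maximally mixed state, where $d_{A}$ and $d_{B}$ are the dimensions of systems $A$ and $B$. For $\varepsilon\in(0,1]$, I would consider the perturbed operator
\begin{equation}
\sigma_{AB}^{(\varepsilon)}\equiv(1-\varepsilon)\sigma_{AB}+\varepsilon\pi_{AB}.
\end{equation}
The key observation is that $\sigma_{AB}^{(\varepsilon)}\in\operatorname{PPT}_{\operatorname{inv}}(A:B)$ for every such $\varepsilon$: it is a convex combination of states and hence a state; it satisfies $\sigma_{AB}^{(\varepsilon)}\geq\varepsilon\pi_{AB}>0$ and is therefore positive definite; and, using $T_{B}(I_{AB})=I_{AB}$ together with $T_{B}(\sigma_{AB})\geq0$, its partial transpose obeys $T_{B}(\sigma_{AB}^{(\varepsilon)})=(1-\varepsilon)T_{B}(\sigma_{AB})+\varepsilon\pi_{AB}\geq\varepsilon\pi_{AB}>0$, so that it too is positive definite. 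Hence each $\sigma_{AB}^{(\varepsilon)}$ is a legitimate candidate in the infimum defining the right-hand side.

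Finally, I would invoke the limit identity \eqref{eq:limit-mu-alpha-support}, applied with the positive definite state $\theta=\pi_{AB}$, which gives
\begin{equation}
\mu_{\alpha}(T_{B}(\rho_{AB})\Vert\sigma_{AB})=\lim_{\varepsilon\searrow0}\mu_{\alpha}(T_{B}(\rho_{AB})\Vert\sigma_{AB}^{(\varepsilon)}).
\end{equation}
Since each $\sigma_{AB}^{(\varepsilon)}$ is faithful, the infimum over $\operatorname{PPT}_{\operatorname{inv}}(A:B)$ is bounded above by $\mu_{\alpha}(T_{B}(\rho_{AB})\Vert\sigma_{AB}^{(\varepsilon)})$ for every $\varepsilon$, and letting $\varepsilon\searrow0$ shows it is bounded above by $\mu_{\alpha}(T_{B}(\rho_{AB})\Vert\sigma_{AB})$. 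As $\sigma_{AB}$ was arbitrary in $\operatorname{PPT}(A:B)$, taking the infimum over $\sigma_{AB}$ and then applying the monotone map $\log_{2}$ yields the reverse inequality, completing the argument. The only genuinely delicate point is the justification of the limit formula \eqref{eq:limit-mu-alpha-support} itself, which controls the possible discontinuity arising from the support condition in the definition of $\mu_{\alpha}$; this is exactly where one appeals to the reasoning of \cite[Lemma~13]{Mueller2013}, and the $\alpha=\infty$ case follows by the analogous argument for $\mu_{\infty}$.
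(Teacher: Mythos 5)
Your proof is correct and follows essentially the same route as the paper's: the paper's own one-sentence proof asserts that $\operatorname{PPT}_{\operatorname{inv}}(A:B)$ is dense in $\operatorname{PPT}(A:B)$ and that the $\alpha$-norm is continuous, which is exactly what your explicit mixing sequence $(1-\varepsilon)\sigma_{AB}+\varepsilon\pi_{AB}$ together with the limit identity \eqref{eq:limit-mu-alpha-support} makes precise. If anything, yours is the more careful rendering, since $\sigma_{AB}\mapsto\mu_{\alpha}(T_{B}(\rho_{AB})\Vert\sigma_{AB})$ is not continuous on all of $\operatorname{PPT}(A:B)$ (it jumps to $+\infty$ where the support condition fails), and the convergence along the perturbation sequence guaranteed by \eqref{eq:limit-mu-alpha-support} is the statement one actually needs.
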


\begin{proof}
This follows because the set $\operatorname{PPT}_{\operatorname{inv}}(A:B)$ is
dense in $\operatorname{PPT}(A:B)$ and the $\alpha$-norm is continuous, so
that the $\alpha$-norm of any state in $\operatorname{PPT}(A:B)$ can be
approximated arbitrarily well by that of a state in $\operatorname{PPT}%
_{\operatorname{inv}}(A:B)$.
\end{proof}

\section{Properties of the functions $\mu_{\alpha}(X\Vert\sigma)$ and
$\nu_{\alpha}(X\Vert\sigma)$}

\label{sec:props-mu-nu}In this section, we establish several properties of the
functions $\mu_{\alpha}(X\Vert\sigma)$ and $\nu_{\alpha}(X\Vert\sigma)$ that
are used in the rest of the paper. We also employ the following
abbreviations:\ CP stands for completely positive, TP stands for
trace-preserving, and CPTP\ stands for completely positive and trace preserving.

Throughout the section, $X\neq0$ is a Hermitian operator and $\sigma$ is a
positive definite operator. Following \cite{OZ99,B13monotone}, for $p\geq1$,
we define the following norm%
\begin{equation}
\left\Vert X\right\Vert _{p,\sigma}\equiv\left\Vert \sigma^{1/2p}%
X\sigma^{1/2p}\right\Vert _{p}.
\end{equation}
Note that%
\begin{equation}
\left\Vert X\right\Vert _{\infty,\sigma}\equiv\lim_{p\rightarrow\infty
}\left\Vert X\right\Vert _{p,\sigma}=\left\Vert X\right\Vert _{\infty}.
\end{equation}
We also define the following CP\ map \cite{B13monotone}:%
\begin{equation}
\Gamma_{\sigma}(X)\equiv\sigma^{1/2}X\sigma^{1/2},
\end{equation}
so that%
\begin{equation}
\Gamma_{\sigma}^{-1}(X)=\sigma^{-1/2}X\sigma^{-1/2}.
\end{equation}

\begin{lemma}
\label{lem:DP-mu-alpha}Let $X\neq0$ be a Hermitian operator, and let $\sigma$
be a positive definite operator. Let $\mathcal{P}$ be a positive and
trace-non-increasing map.\ Then the following inequality holds for all
$\alpha\geq1$:%
\begin{equation}
\nu_{\alpha}(X\Vert\sigma)\geq\nu_{\alpha}(\mathcal{P}(X)\Vert\mathcal{P}%
(\sigma)). \label{eq:nu-monotone-p-map}%
\end{equation}

\end{lemma}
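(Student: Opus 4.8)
The plan is to reduce the stated inequality to a contraction property of a single auxiliary map measured in the weighted norms $\|\cdot\|_{p,\sigma}$, and then to establish that contraction by complex interpolation between the endpoints $\alpha=1$ and $\alpha=\infty$. Since $\nu_\alpha=\log_2\mu_\alpha$ and $\log_2$ is monotone, \eqref{eq:nu-monotone-p-map} is equivalent to $\mu_\alpha(X\Vert\sigma)\ge\mu_\alpha(\mathcal{P}(X)\Vert\mathcal{P}(\sigma))$. First I would verify that the right-hand side is even finite, i.e., that $\operatorname{supp}(\mathcal{P}(X))\subseteq\operatorname{supp}(\mathcal{P}(\sigma))$. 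Because $\sigma$ is positive definite, the Hermitian operator $X$ satisfies $-\lambda\sigma\le X\le\lambda\sigma$ with $\lambda=\|X\|_\infty/\lambda_{\min}(\sigma)$; since a positive map preserves operator inequalities, applying $\mathcal{P}$ yields $-\lambda\mathcal{P}(\sigma)\le\mathcal{P}(X)\le\lambda\mathcal{P}(\sigma)$, which gives both the support containment and, directly, the case $\alpha=\infty$ (any $\lambda$ feasible for $\mu_\infty(X\Vert\sigma)$ is feasible for $\mu_\infty(\mathcal{P}(X)\Vert\mathcal{P}(\sigma))$).

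For $\alpha\in[1,\infty)$ I would use the identity $\mu_\alpha(X\Vert\sigma)=\|\Gamma_\sigma^{-1}(X)\|_{\alpha,\sigma}$, which follows immediately from $\Gamma_\sigma^{-1}(X)=\sigma^{-1/2}X\sigma^{-1/2}$ and the elementary identity $\tfrac{1}{2\alpha}-\tfrac12=\tfrac{1-\alpha}{2\alpha}$. Setting $Y\equiv\Gamma_\sigma^{-1}(X)$ and introducing the auxiliary map
\begin{equation}
\Phi\equiv\Gamma_{\mathcal{P}(\sigma)}^{-1}\circ\mathcal{P}\circ\Gamma_\sigma,
\end{equation}
where the inverse is taken on $\operatorname{supp}(\mathcal{P}(\sigma))$, the desired inequality becomes the contraction bound $\|\Phi(Y)\|_{\alpha,\mathcal{P}(\sigma)}\le\|Y\|_{\alpha,\sigma}$.

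I would then establish this bound at the two endpoints. At $\alpha=\infty$ we have $\|\cdot\|_{\infty,\sigma}=\|\cdot\|_\infty$, and $\Phi$ is positive and sub-unital, since $\Phi(I)=\Gamma_{\mathcal{P}(\sigma)}^{-1}(\mathcal{P}(\sigma))=\Pi\le I$, with $\Pi$ the projection onto $\operatorname{supp}(\mathcal{P}(\sigma))$; hence $-\|Y\|_\infty I\le Y\le\|Y\|_\infty I$ implies $\|\Phi(Y)\|_\infty\le\|Y\|_\infty$. At $\alpha=1$ we have $\|Y\|_{1,\sigma}=\|\Gamma_\sigma(Y)\|_1=\|X\|_1$ and $\|\Phi(Y)\|_{1,\mathcal{P}(\sigma)}=\|\mathcal{P}(X)\|_1$, so the bound reads $\|\mathcal{P}(X)\|_1\le\|X\|_1$; writing $X=X_+-X_-$ in terms of its positive and negative parts and using positivity together with trace-non-increase of $\mathcal{P}$ gives $\|\mathcal{P}(X)\|_1\le\operatorname{Tr}[\mathcal{P}(X_+)]+\operatorname{Tr}[\mathcal{P}(X_-)]\le\operatorname{Tr}[X_+]+\operatorname{Tr}[X_-]=\|X\|_1$. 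Both endpoint arguments use only Hermiticity of $X$, not positivity.

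The interior $\alpha\in(1,\infty)$ I would obtain exactly as in \cite{B13monotone}: the weighted norms $\|\cdot\|_{p,\sigma}$ form a complex-interpolation scale, and one applies the Stein--Hirschman theorem to an analytic family of the form $G(z)=\mathcal{P}(\sigma)^{-g(z)}\,\mathcal{P}\!\left(\sigma^{g(z)}Y\sigma^{g(z)}\right)\mathcal{P}(\sigma)^{-g(z)}$ on the strip $0\le\operatorname{Re}(z)\le1$, whose boundary values are controlled by the two endpoint contractions just proved. The main obstacle, and the only genuinely new point relative to \cite{B13monotone}, is to confirm that this argument survives when $X$ is merely Hermitian. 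This is the observation we make, in the same spirit as \cite{MHR2017} for positive maps: in the family $G(z)$ only the positive operators $\sigma$ and $\mathcal{P}(\sigma)$ are raised to complex powers, while $X$ (equivalently $Y$) enters linearly, so analyticity and the boundary estimates hold verbatim and the interpolation bound goes through. Singularity of $\mathcal{P}(\sigma)$ causes no difficulty, as it is handled by the generalized inverse on its support, consistently with the support containment above and the limiting characterization in \eqref{eq:limit-mu-alpha-support}.
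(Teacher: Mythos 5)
Your overall route is the same as the paper's: reduce to the contraction bound $\left\Vert \Gamma_{\mathcal{P}(\sigma)}^{-1}\circ\mathcal{P}\circ\Gamma_{\sigma}\right\Vert_{(\alpha,\sigma)\rightarrow(\alpha,\mathcal{P}(\sigma))}\leq 1$, verify it at the endpoints $\alpha=1$ and $\alpha=\infty$, and interpolate via the machinery of \cite{B13monotone}. However, there is a genuine gap at the interpolation step. The Riesz--Thorin-type theorem for the weighted norms (Theorem~4 of \cite{B13monotone}, or equivalently your Stein--Hirschman family) requires the endpoint bounds as operator norms over \emph{all} inputs, not merely over Hermitian ones. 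The reason is structural: at boundary points of the strip, $g(z)$ has nonzero imaginary part, so the operator fed into $\mathcal{P}$, namely $\sigma^{g(z)}Y\sigma^{g(z)}$, is \emph{not} Hermitian even when $Y$ is (it has the form $UYU$ with $U$ unitary, not $UYU^{\dagger}$). Consequently your two endpoint estimates --- the Jordan-decomposition argument for $\alpha=1$ and the operator-inequality argument $-\left\Vert Y\right\Vert_{\infty}I\leq Y\leq\left\Vert Y\right\Vert_{\infty}I$ for $\alpha=\infty$ --- which are valid only for Hermitian inputs, do not control the boundary values. The sentence claiming that ``$X$ enters linearly, so the boundary estimates hold verbatim'' is precisely where the proof breaks.

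The gap is fixable, and the fix is what the paper does. For $\alpha=\infty$, invoke the Russo--Dye theorem (Corollary~2.3.8 of \cite{B07}): for any positive map $\Phi$ one has $\left\Vert \Phi\right\Vert_{\infty\rightarrow\infty}=\left\Vert \Phi(I)\right\Vert_{\infty}$, a bound over all inputs; applied to $\Phi=\Gamma_{\mathcal{P}(\sigma)}^{-1}\circ\mathcal{P}\circ\Gamma_{\sigma}$, your sub-unitality computation then yields $\left\Vert \Phi\right\Vert_{\infty\rightarrow\infty}\leq1$. For $\alpha=1$, the full bound $\left\Vert \mathcal{P}\right\Vert_{1\rightarrow1}\leq1$ over all inputs follows by duality from the same theorem: $\left\Vert \mathcal{P}\right\Vert_{1\rightarrow1}=\left\Vert \mathcal{P}^{\dagger}\right\Vert_{\infty\rightarrow\infty}=\left\Vert \mathcal{P}^{\dagger}(I)\right\Vert_{\infty}\leq1$, where the Hilbert--Schmidt adjoint $\mathcal{P}^{\dagger}$ is positive because $\mathcal{P}$ is, and sub-unital because $\mathcal{P}$ is trace-non-increasing. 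With these full-norm endpoint bounds, the interpolation theorem applies, and the Hermiticity of $X$ is needed only to identify $\mu_{\alpha}(X\Vert\sigma)$ with $\left\Vert \Gamma_{\sigma}^{-1}(X)\right\Vert_{\alpha,\sigma}$ --- which is exactly the observation that lets the argument of \cite{B13monotone} extend beyond positive semi-definite operators. Your support/finiteness remark and the identification of the weighted norms are correct and match the paper.
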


\begin{proof}
This follows as a direct consequence of the reasoning given in
\cite[Theorem~6]{B13monotone} (see also \cite[Theorem~2]{MHR2017}). We repeat
the argument here for completeness.

Since $\alpha\geq1$ and the logarithm function is monotone increasing,
\eqref{eq:nu-monotone-p-map} is equivalent to%
\begin{equation}
\left\Vert \Gamma_{\sigma}^{-1}(X)\right\Vert _{\alpha,\sigma}\geq\left\Vert
\Gamma_{\mathcal{P}(\sigma)}^{-1}(\mathcal{P}(X))\right\Vert _{\alpha
,\mathcal{P}(\sigma)}.
\end{equation}
Observe that%
\begin{equation}
\Gamma_{\mathcal{P}(\sigma)}^{-1}(\mathcal{P}(X))=(\Gamma_{\mathcal{P}%
(\sigma)}^{-1}\circ\mathcal{P}\circ\Gamma_{\sigma})(\Gamma_{\sigma}^{-1}(X)).
\end{equation}
As a result,%
\begin{multline}
\left\Vert \Gamma_{\mathcal{P}(\sigma)}^{-1}(\mathcal{P}(X))\right\Vert
_{\alpha,\mathcal{P}(\sigma)}\\
\leq\left\Vert \Gamma_{\mathcal{P}(\sigma)}^{-1}\circ\mathcal{P}\circ
\Gamma_{\sigma}\right\Vert _{\left(  \alpha,\sigma\right)  \rightarrow\left(
\alpha,\mathcal{P}(\sigma)\right)  }\cdot\left\Vert \Gamma_{\sigma}%
^{-1}(X)\right\Vert _{\alpha,\sigma},
\end{multline}
where, for a map $\mathcal{M}$,%
\begin{equation}
\left\Vert \mathcal{M}\right\Vert _{\left(  \alpha,\sigma\right)
\rightarrow\left(  \alpha^{\prime},\sigma^{\prime}\right)  }\equiv\sup
_{Y\neq0}\frac{\left\Vert \mathcal{M}(Y)\right\Vert _{\alpha^{\prime}%
,\sigma^{\prime}}}{\left\Vert Y\right\Vert _{\alpha,\sigma}}.
\end{equation}
We also define%
\begin{equation}
\left\Vert \mathcal{M}\right\Vert _{\alpha\rightarrow\alpha^{\prime}}%
\equiv\left\Vert \mathcal{M}\right\Vert _{\left(  \alpha,I\right)
\rightarrow\left(  \alpha^{\prime},I\right)  }.
\end{equation}
So it suffices to establish that%
\begin{equation}
\left\Vert \Gamma_{\mathcal{P}(\sigma)}^{-1}\circ\mathcal{P}\circ
\Gamma_{\sigma}\right\Vert _{\left(  \alpha,\sigma\right)  \rightarrow\left(
\alpha,\mathcal{P}(\sigma)\right)  }\leq1. \label{eq:beigi-map-bnd}%
\end{equation}
By employing \cite[Theorem~4]{B13monotone}, it is only necessary to establish
\eqref{eq:beigi-map-bnd} for $\alpha=1$ and $\alpha=\infty$.

For $\alpha=1$, it follows that%
\begin{equation}
\left\Vert \Gamma_{\mathcal{P}(\sigma)}^{-1}\circ\mathcal{P}\circ
\Gamma_{\sigma}\right\Vert _{\left(  1,\sigma\right)  \rightarrow\left(
1,\mathcal{P}(\sigma)\right)  }=\left\Vert \mathcal{P}\right\Vert
_{1\rightarrow1}\leq1
\end{equation}
because $\mathcal{P}$ is positive and trace non-increasing. For $\alpha
=\infty$, it follows that%
\begin{multline}
\left\Vert \Gamma_{\mathcal{P}(\sigma)}^{-1}\circ\mathcal{P}\circ
\Gamma_{\sigma}\right\Vert _{(\infty,\sigma)\rightarrow(\infty,\mathcal{P}%
(\sigma))}\\
=\left\Vert \Gamma_{\mathcal{P}(\sigma)}^{-1}\circ\mathcal{P}\circ
\Gamma_{\sigma}\right\Vert _{\infty\rightarrow\infty}.
\end{multline}
Since $\Gamma_{\mathcal{P}(\sigma)}^{-1}\circ\mathcal{P}\circ\Gamma_{\sigma}$
is a positive map, by the Russo--Dye theorem (Corollary~2.3.8 of \cite{B07}),
we have that%
\begin{align}
&  \left\Vert \Gamma_{\mathcal{P}(\sigma)}^{-1}\circ\mathcal{P}\circ
\Gamma_{\sigma}\right\Vert _{\infty\rightarrow\infty}\nonumber\\
&  =\left\Vert (\Gamma_{\mathcal{P}(\sigma)}^{-1}\circ\mathcal{P}\circ
\Gamma_{\sigma})(I)\right\Vert _{\infty}\\
&  =\left\Vert (\Gamma_{\mathcal{P}(\sigma)}^{-1}\circ\mathcal{P}%
)(\sigma)\right\Vert _{\infty}\\
&  =\left\Vert \left(  \mathcal{P}(\sigma)\right)  ^{-1/2}\mathcal{P}%
(\sigma)\left(  \mathcal{P}(\sigma)\right)  ^{-1/2}\right\Vert _{\infty}\\
&  =1,
\end{align}
concluding the proof.
\end{proof}

\begin{lemma}
\label{lem:mu-alpha-cq}Let%
\begin{align}
Y_{XB}  & \equiv\sum_{x}p(x)|x\rangle\langle x|_{X}\otimes Y_{B}^{x},\\
\sigma_{XB}  & \equiv\sum_{x}q(x)|x\rangle\langle x|_{X}\otimes\sigma_{B}^{x},
\end{align}
where $\{Y_{B}^{x}\}_{x}$ is a set of Hermitian operators such that $Y_{B}%
^{x}\neq0$ for all $x$, $\left\{  p(x)\right\}  _{x}$ is a probability
distribution, $\{\sigma_{B}^{x}\}$ is a set of positive definite operators,
and $\{q(x)\}_{x}$ is a set of strictly positive reals. Then for $\alpha\geq
1$, we have that%
\begin{multline}
\nu_{\alpha}(Y_{XB}\Vert\sigma_{XB})\geq\sum_{x}p(x)\nu_{\alpha}(Y_{B}%
^{x}\Vert\sigma_{B}^{x})\\
+\left(  \frac{\alpha-1}{\alpha}\right)  D(p\Vert q),
\end{multline}
where $D(p\Vert q):=\sum_{x}p(x)\log_{2}(p(x)/q(x))$ is the classical relative entropy.
\end{lemma}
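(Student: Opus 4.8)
The plan is to reduce the claimed inequality to an explicit evaluation of $\mu_{\alpha}(Y_{XB}\Vert\sigma_{XB})$, made possible by the classical--quantum (block-diagonal) structure of both operators, followed by a single application of Jensen's inequality for the concave function $\log_{2}$. First I would compute the operator inside the $\alpha$-norm. Because $\sigma_{XB}$ is block diagonal in the orthonormal basis $\{|x\rangle_{X}\}_{x}$, its powers act blockwise, so
\[
\sigma_{XB}^{\frac{1-\alpha}{2\alpha}}Y_{XB}\sigma_{XB}^{\frac{1-\alpha}{2\alpha}}=\sum_{x}q(x)^{\frac{1-\alpha}{\alpha}}p(x)\,|x\rangle\langle x|_{X}\otimes(\sigma_{B}^{x})^{\frac{1-\alpha}{2\alpha}}Y_{B}^{x}(\sigma_{B}^{x})^{\frac{1-\alpha}{2\alpha}}.
\]
This is again block diagonal across $x$, and the $\alpha$-norm is additive over orthogonal blocks in the sense that $\Vert\bigoplus_{x}W_{x}\Vert_{\alpha}^{\alpha}=\sum_{x}\Vert W_{x}\Vert_{\alpha}^{\alpha}$. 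Using positivity of the scalars $q(x)^{(1-\alpha)/\alpha}p(x)$ together with the definition of $\mu_{\alpha}$ (the support condition is automatic since each $\sigma_{B}^{x}>0$), I obtain the closed form
\[
\mu_{\alpha}(Y_{XB}\Vert\sigma_{XB})^{\alpha}=\sum_{x}q(x)^{1-\alpha}p(x)^{\alpha}\,\mu_{\alpha}(Y_{B}^{x}\Vert\sigma_{B}^{x})^{\alpha}=\sum_{x}p(x)\,f(x),
\]
where $f(x):=(p(x)/q(x))^{\alpha-1}\mu_{\alpha}(Y_{B}^{x}\Vert\sigma_{B}^{x})^{\alpha}$.

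Next I would take $\log_{2}$ of both sides and divide by $\alpha$. A short bookkeeping step shows that $\frac{1}{\alpha}\sum_{x}p(x)\log_{2}f(x)$ is exactly the right-hand side of the lemma, namely $\sum_{x}p(x)\nu_{\alpha}(Y_{B}^{x}\Vert\sigma_{B}^{x})+\frac{\alpha-1}{\alpha}D(p\Vert q)$, while $\frac{1}{\alpha}\log_{2}\!\big(\sum_{x}p(x)f(x)\big)$ equals $\nu_{\alpha}(Y_{XB}\Vert\sigma_{XB})$. Thus the claim is equivalent to
\[
\log_{2}\!\Big(\sum_{x}p(x)f(x)\Big)\geq\sum_{x}p(x)\log_{2}f(x),
\]
which is precisely the concavity (Jensen) inequality for $\log_{2}$ applied to the nonnegative random variable $f$ under the distribution $p$, with the convention that terms with $p(x)=0$ are dropped. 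For $p(x)>0$ one has $f(x)>0$, since $Y_{B}^{x}\neq0$ forces $\mu_{\alpha}(Y_{B}^{x}\Vert\sigma_{B}^{x})>0$, so the logarithms are well defined.

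The computation is essentially routine, and the only point requiring genuine care is the endpoint $\alpha=\infty$, where $\frac{\alpha-1}{\alpha}\to1$ and the $\alpha$-norm degenerates to the operator norm. There the same block structure yields $\mu_{\infty}(Y_{XB}\Vert\sigma_{XB})=\max_{x}(p(x)/q(x))\,\mu_{\infty}(Y_{B}^{x}\Vert\sigma_{B}^{x})$, and the desired bound then follows from the elementary fact that a maximum dominates a probability-weighted average rather than from Jensen; alternatively, one can recover this case by passing to the limit $\alpha\to\infty$ using $\mu_{\infty}=\lim_{\alpha\to\infty}\mu_{\alpha}$. In short, the substantive content is the explicit evaluation of $\mu_{\alpha}$ on block-diagonal operators, after which the inequality is immediate from concavity of the logarithm.
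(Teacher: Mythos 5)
Your proof is correct and follows essentially the same route as the paper's: evaluate $\mu_{\alpha}$ blockwise using $\bigl\Vert \sum_{x}|x\rangle\langle x|\otimes W_{x}\bigr\Vert_{\alpha}^{\alpha}=\sum_{x}\Vert W_{x}\Vert_{\alpha}^{\alpha}$, pull out the scalars $p(x)^{\alpha}q(x)^{1-\alpha}$, and apply concavity of $\log_{2}$ (Jensen); your single computation even subsumes the paper's separate $\alpha=1$ case, and your extra remarks on $\alpha=\infty$ are correct but not needed for the statement as given.
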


\begin{proof}
The case $\alpha=1$ follows directly because%
\begin{align}
\nu_{\alpha=1}(Y_{XB}\Vert\sigma_{XB})  & =\log_{2}\left\Vert Y_{XB}%
\right\Vert _{1}\\
& =\log_{2}\left[  \sum_{x}p(x)\left\Vert Y_{B}^{x}\right\Vert _{1}\right]
\\
& \geq\sum_{x}p(x)\log_{2}\left\Vert Y_{B}^{x}\right\Vert _{1}\\
& =\sum_{x}p(x)\nu_{\alpha=1}(Y_{B}^{x}\Vert\sigma_{B}^{x}),
\end{align}
where the inequality follows from concavity of the logarithm.

For the case $\alpha>1$, consider that%
\begin{align}
& \nu_{\alpha}(Y_{XB}\Vert\sigma_{XB})\nonumber\\
& =\log_{2}\mu_{\alpha}(Y_{XB}\Vert\sigma_{XB})\\
& =\frac{1}{\alpha}\log_{2}\left[  \mu_{\alpha}(Y_{XB}\Vert\sigma
_{XB})\right]  ^{\alpha}\\
& =\frac{1}{\alpha}\log_{2}\sum_{x}\left[  \mu_{\alpha}(p(x)Y_{B}^{x}\Vert
q(x)\sigma_{B}^{x})\right]  ^{\alpha}\\
& =\frac{1}{\alpha}\log_{2}\sum_{x}p(x)^{\alpha}q(x)^{1-\alpha}\left[
\mu_{\alpha}(Y_{B}^{x}\Vert\sigma_{B}^{x})\right]  ^{\alpha}\\
& =\frac{1}{\alpha}\log_{2}\sum_{x}p(x)\left(  \frac{p(x)}{q(x)}\right)
^{\alpha-1}\left[  \mu_{\alpha}(Y_{B}^{x}\Vert\sigma_{B}^{x})\right]
^{\alpha}\\
& \geq\frac{1}{\alpha}\sum_{x}p(x)\log_{2}\left[  \left(  \frac{p(x)}%
{q(x)}\right)  ^{\alpha-1}\left[  \mu_{\alpha}(Y_{B}^{x}\Vert\sigma_{B}%
^{x})\right]  ^{\alpha}\right]  \\
& =\left(  \frac{\alpha-1}{\alpha}\right)  \sum_{x}p(x)\log_{2}\left(
\frac{p(x)}{q(x)}\right)  \nonumber\\
& \qquad+\frac{1}{\alpha}\sum_{x}p(x)\log_{2}\left[  \mu_{\alpha}(Y_{B}%
^{x}\Vert\sigma_{B}^{x})\right]  ^{\alpha}\\
& =\left(\frac{\alpha-1}{\alpha}\right) D(p\Vert q)+\sum_{x}p(x)\nu_{\alpha}(Y_{B}^{x}%
\Vert\sigma_{B}^{x}).
\end{align}
The third equality follows from definitions and the fact that $\left\Vert
\sum_{y}|y\rangle\langle y|\otimes B^{y}\right\Vert _{\alpha}^{\alpha}%
=\sum_{y}\left\Vert B^{y}\right\Vert _{\alpha}^{\alpha}$ for a set
$\{B^{y}\}_{y}$ of operators. The sole inequality is a consequence of
concavity of the logarithm.
\end{proof}

\begin{lemma}
\label{lem:mu-alpha-to-1-ineq}Let $X\neq0$ be a Hermitian operator, and let
$\sigma$ be a positive definite operator. Then the following inequality holds
for all $\alpha\geq1$:%
\begin{equation}
\mu_{1}(X\Vert\sigma)=\left\Vert X\right\Vert _{1}\leq\mu_{\alpha}%
(X\Vert\sigma)\cdot(\operatorname{Tr}[\sigma])^{(\alpha-1)/\alpha}.
\end{equation}
Equivalently,%
\begin{equation}
\log_{2}\left\Vert X\right\Vert _{1}\leq\nu_{\alpha}(X\Vert\sigma)
+ \left(\frac{\alpha-1}{\alpha}\right) \log_2\operatorname{Tr}[\sigma].
\end{equation}

\end{lemma}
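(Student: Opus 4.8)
The plan is to reduce the claimed inequality to a single application of the generalized (three-factor) H\"older inequality for Schatten norms, namely that $\|A_1 A_2 A_3\|_1 \le \|A_1\|_{p_1}\|A_2\|_{p_2}\|A_3\|_{p_3}$ whenever $1/p_1 + 1/p_2 + 1/p_3 = 1$. The case $\alpha = 1$ is immediate, since then $\mu_1(X\Vert\sigma) = \|X\|_1$ by definition and the factor $(\operatorname{Tr}[\sigma])^{(\alpha-1)/\alpha}$ equals $1$, so the inequality holds with equality. Thus I focus on $\alpha > 1$.

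First I would introduce the shorthand $\beta := (\alpha-1)/(2\alpha) \ge 0$, so that the exponent appearing in the definition of $\mu_\alpha$ is exactly $-\beta$. Setting $Y := \sigma^{-\beta} X \sigma^{-\beta}$, we have $\mu_\alpha(X\Vert\sigma) = \|Y\|_\alpha$ directly from \eqref{eq:mu-def} (the support condition is automatic because $\sigma$ is positive definite). Since $\sigma$ is invertible, I can recover $X$ as $X = \sigma^{\beta} Y \sigma^{\beta}$, which is the factorization to which I want to apply H\"older.

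Next I would choose the exponents symmetrically as $(p, \alpha, p)$, where $p$ is fixed by the constraint $2/p + 1/\alpha = 1$, i.e., $p = 2\alpha/(\alpha-1)$. The generalized H\"older inequality then gives $\|X\|_1 = \|\sigma^\beta Y \sigma^\beta\|_1 \le \|\sigma^\beta\|_p \, \|Y\|_\alpha \, \|\sigma^\beta\|_p$. The only computation remaining is to evaluate $\|\sigma^\beta\|_p$: since $\sigma^\beta \ge 0$, we have $\|\sigma^\beta\|_p = (\operatorname{Tr}[\sigma^{\beta p}])^{1/p}$, and the exponent bookkeeping $\beta p = 1$ collapses this to $(\operatorname{Tr}[\sigma])^{1/p} = (\operatorname{Tr}[\sigma])^{(\alpha-1)/(2\alpha)}$. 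Substituting back, $\|\sigma^\beta\|_p^2 = (\operatorname{Tr}[\sigma])^{(\alpha-1)/\alpha}$, and combining with $\|Y\|_\alpha = \mu_\alpha(X\Vert\sigma)$ yields exactly $\|X\|_1 \le \mu_\alpha(X\Vert\sigma)\cdot(\operatorname{Tr}[\sigma])^{(\alpha-1)/\alpha}$. The equivalent logarithmic statement then follows by taking $\log_2$ of both sides and using $\nu_\alpha = \log_2 \mu_\alpha$.

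There is no serious obstacle here; the entire argument is bookkeeping of H\"older exponents, and the only point to verify carefully is that the symmetric choice $p = 2\alpha/(\alpha-1)$ simultaneously satisfies the H\"older constraint and produces $\beta p = 1$, so that the two $\sigma$-factors assemble into $\operatorname{Tr}[\sigma]$ rather than some fractional power of a different trace. One should also confirm $p \ge 1$ for $\alpha > 1$ (indeed $p > 2$), so that $\|\cdot\|_p$ is a genuine norm and the inequality applies.
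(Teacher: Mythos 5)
Your proof is correct and is essentially identical to the paper's: the paper also factors $X = \sigma^{1/2\alpha'}\,\bigl(\sigma^{-1/2\alpha'}X\sigma^{-1/2\alpha'}\bigr)\,\sigma^{1/2\alpha'}$ with $\alpha' = \alpha/(\alpha-1)$ (so its Hölder exponent $2\alpha'$ equals your $p = 2\alpha/(\alpha-1)$ and its $\sigma^{1/2\alpha'}$ equals your $\sigma^{\beta}$) and applies the same three-factor Hölder inequality together with $\bigl\Vert \sigma^{1/2\alpha'}\bigr\Vert_{2\alpha'}^{2} = (\operatorname{Tr}[\sigma])^{(\alpha-1)/\alpha}$. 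The only cosmetic difference is that you state the bound in the direction $\Vert X\Vert_1 \leq \cdots$ and treat $\alpha = 1$ separately, whereas the paper (following the proof of Theorem~5 of the Beigi reference) lower-bounds $\mu_{\alpha}(X\Vert\sigma)\cdot(\operatorname{Tr}[\sigma])^{1/\alpha'}$.
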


\begin{proof}
The proof of this lemma closely follows the proof of \cite[Theorem~5]%
{B13monotone}. Set $\alpha^{\prime}=\alpha/\left(  \alpha-1\right)  $.
Consider that%
\begin{align}
&  \mu_{\alpha}(X\Vert\sigma)\cdot(\operatorname{Tr}[\sigma])^{1/\alpha
^{\prime}}\nonumber\\
&  =\left\Vert \sigma^{\left(  1-\alpha\right)  /2\alpha}X\sigma^{\left(
1-\alpha\right)  /2\alpha}\right\Vert _{\alpha}\cdot(\operatorname{Tr}%
[\sigma])^{1/\alpha^{\prime}}\\
&  =\left\Vert \sigma^{-1/2\alpha^{\prime}}X\sigma^{-1/2\alpha^{\prime}%
}\right\Vert _{\alpha}\cdot(\operatorname{Tr}[\sigma])^{1/\alpha^{\prime}}\\
&  =\left\Vert \sigma^{1/2\alpha^{\prime}}\right\Vert _{2\alpha^{\prime}%
}\left\Vert \sigma^{-1/2\alpha^{\prime}}X\sigma^{-1/2\alpha^{\prime}%
}\right\Vert _{\alpha}\left\Vert \sigma^{1/2\alpha^{\prime}}\right\Vert
_{2\alpha^{\prime}}\\
&  \geq\left\Vert \sigma^{1/2\alpha^{\prime}}\sigma^{-1/2\alpha^{\prime}%
}X\sigma^{-1/2\alpha^{\prime}}\sigma^{1/2\alpha^{\prime}}\right\Vert _{1}\\
&  =\left\Vert X\right\Vert _{1},
\end{align}
where we used that $(\operatorname{Tr}[\sigma])^{1/\alpha^{\prime}}=\left\Vert
\sigma^{1/2\alpha^{\prime}}\right\Vert _{2\alpha^{\prime}}^{2}$ and the
inequality is a consequence of H\"{o}lder's inequality.
\end{proof}

\begin{lemma}
\label{lem:monotone-lemma}Let $X\neq0$ be a Hermitian operator, and let
$\sigma$ be a positive definite operator. Then the following inequality holds
for all $\beta>\alpha>1$:%
\begin{equation}
\left[  \frac{\mu_{\alpha}(X\Vert\sigma)}{\left\Vert X\right\Vert _{1}%
}\right]  ^{\frac{\alpha}{\alpha-1}}\leq\left[  \frac{\mu_{\beta}(X\Vert
\sigma)}{\left\Vert X\right\Vert _{1}}\right]  ^{\frac{\beta}{\beta-1}}.
\end{equation}
Equivalently,%
\begin{multline}
\frac{\alpha}{\alpha-1}\left[  \nu_{\alpha}(X\Vert\sigma)-\log_{2}\left\Vert
X\right\Vert _{1}\right]  \\
\leq\frac{\beta}{\beta-1}\left[  \nu_{\beta}(X\Vert\sigma)-\log_{2}\left\Vert
X\right\Vert _{1}\right]
\end{multline}

\end{lemma}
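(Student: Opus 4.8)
The plan is to recast the two-point inequality as a single log-convexity statement and then to prove that statement by complex interpolation, adapting the methods of \cite{B13monotone} to the present Hermitian setting. Taking $\log_{2}$ of the claimed inequality and using $\nu_{\gamma}=\log_{2}\mu_{\gamma}$ together with $\nu_{1}(X\Vert\sigma)=\log_{2}\Vert X\Vert_{1}$ (which is independent of $\sigma$), the assertion is seen to be equivalent to
\begin{equation}
\nu_{\alpha}(X\Vert\sigma)\leq(1-\theta)\,\nu_{1}(X\Vert\sigma)+\theta\,\nu_{\beta}(X\Vert\sigma),\qquad\theta\equiv\frac{\beta(\alpha-1)}{\alpha(\beta-1)}\in(0,1).
\end{equation}
Indeed, since $\tfrac{\alpha}{\alpha-1}>0$ and $\tfrac{\alpha}{\alpha-1}\theta=\tfrac{\beta}{\beta-1}$, multiplying this inequality through by $\tfrac{\alpha}{\alpha-1}$ and rearranging reproduces exactly the stated form, and conversely. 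A direct computation shows that $\theta$ is precisely the interpolation parameter determined by $\frac1\alpha=(1-\theta)\cdot1+\theta\cdot\frac1\beta$, so the displayed inequality is the convexity (chord) inequality for the function $\tfrac1\alpha\mapsto\nu_{\alpha}(X\Vert\sigma)$ with endpoints $\alpha=1$ and $\alpha=\beta$.

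To establish this, I would use the representation $\mu_{\alpha}(X\Vert\sigma)=\Vert\sigma^{\frac{1-\alpha}{2\alpha}}X\sigma^{\frac{1-\alpha}{2\alpha}}\Vert_{\alpha}$ and set up a Stein--Hirschman interpolation on the strip $\{z:0\leq\operatorname{Re}(z)\leq1\}$. Define the operator-valued function
\begin{equation}
G(z)\equiv\sigma^{f(z)}X\sigma^{f(z)},\qquad f(z)\equiv-\frac{z(\beta-1)}{2\beta},
\end{equation}
and the interpolated Schatten exponent $p(z)$ through $\frac{1}{p(z)}=1-z\,\frac{\beta-1}{\beta}$, so that $p(0)=1$, $p(1)=\beta$, and $p(\theta)=\alpha$. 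Because $\sigma$ is positive definite and the underlying space is finite-dimensional, $\sigma^{f(z)}=e^{f(z)\log\sigma}$ is entire and bounded on the closed strip, so $G$ satisfies the hypotheses of the interpolation theorem; and at the interior point $z=\theta$ one has $f(\theta)=\frac{1-\alpha}{2\alpha}$, hence $\Vert G(\theta)\Vert_{p(\theta)}=\mu_{\alpha}(X\Vert\sigma)$.

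The crux is to evaluate the two boundary norms, and here I expect everything to collapse to the endpoint values by unitary invariance of the Schatten norms. On the line $\operatorname{Re}(z)=0$ the exponent $f(it)$ is purely imaginary, so $\sigma^{f(it)}$ is unitary and $\Vert G(it)\Vert_{1}=\Vert X\Vert_{1}$ for all $t$. On the line $\operatorname{Re}(z)=1$ one has $\sigma^{f(1+it)}=\sigma^{\frac{1-\beta}{2\beta}}\sigma^{-it(\beta-1)/2\beta}$, and since the unitary factor $\sigma^{-it(\beta-1)/2\beta}$ commutes with the positive factor $\sigma^{\frac{1-\beta}{2\beta}}$, both copies of it can be pulled to the outside of $G(1+it)$; unitary invariance of $\Vert\cdot\Vert_{\beta}$ then gives $\Vert G(1+it)\Vert_{\beta}=\mu_{\beta}(X\Vert\sigma)$ for all $t$. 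As both boundary norms are constant in $t$, the Stein--Hirschman theorem yields
\begin{equation}
\log_{2}\Vert G(\theta)\Vert_{p(\theta)}\leq(1-\theta)\log_{2}\Vert X\Vert_{1}+\theta\log_{2}\mu_{\beta}(X\Vert\sigma),
\end{equation}
which is exactly the required inequality.

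The point that requires care—and the reason this is not merely a restatement of a known result—is that $X$ is only Hermitian, not positive semi-definite. However, positivity of $X$ is never invoked: the function $G$ is built from the fixed operator $X$ sandwiched by analytic powers of $\sigma$, and the boundary estimates use only unitary invariance of Schatten norms, which holds for arbitrary operators. This is the same observation made around Lemma~\ref{lem:DP-mu-alpha}, namely that the complex-interpolation machinery of \cite{B13monotone} applies verbatim once one of the two operators is merely Hermitian. The sole remaining technical step is the routine verification of the analyticity and boundedness hypotheses of the interpolation theorem, which is immediate in finite dimensions with $\sigma$ positive definite.
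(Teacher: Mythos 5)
Your proof is correct, and its skeleton coincides with the paper's: both reduce the claim to the two-point inequality $\mu_{\alpha}(X\Vert\sigma)\leq\Vert X\Vert_{1}^{1-\theta}\,\mu_{\beta}(X\Vert\sigma)^{\theta}$ with the same interpolation parameter $\theta=\beta(\alpha-1)/[\alpha(\beta-1)]$ determined by $\frac{1}{\alpha}=(1-\theta)+\frac{\theta}{\beta}$, and both then use the identity $\theta\cdot\frac{\alpha}{\alpha-1}=\frac{\beta}{\beta-1}$ to rearrange into the stated form. The difference lies in how that two-point inequality is obtained. The paper simply cites \cite[Corollary~3]{B13monotone}: writing $\mu_{p}(X\Vert\sigma)=\Vert\Gamma_{\sigma}^{-1}(X)\Vert_{p,\sigma}$, the inequality is the log-convexity of the weighted norm $\Vert\cdot\Vert_{p,\sigma}$ in $1/p$, applied to the merely Hermitian operator $\Gamma_{\sigma}^{-1}(X)$. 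You instead prove the same inequality from scratch by Stein--Hirschman interpolation with the explicit analytic family $G(z)=\sigma^{f(z)}X\sigma^{f(z)}$, verifying that the boundary norms collapse to $\Vert X\Vert_{1}$ and $\mu_{\beta}(X\Vert\sigma)$ by unitary invariance; your checks that $p(\theta)=\alpha$ and $f(\theta)=\frac{1-\alpha}{2\alpha}$ are correct, and, as you observe, positivity of $X$ is never invoked. What your route buys is self-containedness: it makes explicit why the interpolation machinery of \cite{B13monotone} survives the passage from positive semi-definite to Hermitian $X$, a point the paper handles only by the general remark that those methods apply when one of the operators is Hermitian. What the paper's route buys is brevity (two norm identities plus one citation) and alignment with the proof of \cite[Theorem~7]{B13monotone}, which it follows line by line. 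Since Beigi's Corollary~3 is itself established by precisely the three-line argument you give, your proof is best described as an unpacking of the paper's citation rather than a logically distinct path; there is no gap in it.
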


\begin{proof}
The proof of this lemma follows the proof of \cite[Theorem~7]{B13monotone}
closely. Since $1<\alpha<\beta$, it follows that $\frac{1}{\beta}<\frac
{1}{\alpha}<1$, so that there exists $\theta\in\left(  0,1\right)  $ such that%
\begin{equation}
\frac{1}{\alpha}=1-\theta+\frac{\theta}{\beta}.
\end{equation}
We then find that by simple manipulations that%
\begin{equation}
\theta\left(  \frac{\alpha}{\alpha-1}\right)  =\left(  \frac{\beta}{\beta
-1}\right)  .
\end{equation}
Consider from \cite[Corollary~3]{B13monotone} that the following inequality
holds%
\begin{equation}
\left\Vert \Gamma_{\sigma}^{-1}(X)\right\Vert _{\alpha,\sigma}\leq\left\Vert
\Gamma_{\sigma}^{-1}(X)\right\Vert _{1,\sigma}^{1-\theta}\left\Vert
\Gamma_{\sigma}^{-1}(X)\right\Vert _{\beta,\sigma}^{\theta}.
\end{equation}
Consider that%
\begin{align}
\left\Vert \Gamma_{\sigma}^{-1}(X)\right\Vert _{1,\sigma} &  =\left\Vert
X\right\Vert _{1},\\
\left\Vert \Gamma_{\sigma}^{-1}(X)\right\Vert _{\alpha,\sigma} &  =\mu
_{\alpha}(X\Vert\sigma),\\
\left\Vert \Gamma_{\sigma}^{-1}(X)\right\Vert _{\beta,\sigma} &  =\mu_{\beta
}(X\Vert\sigma),
\end{align}
from which we conclude that%
\begin{equation}
\mu_{\alpha}(X\Vert\sigma)\leq\left\Vert X\right\Vert _{1}^{1-\theta}\cdot
\mu_{\beta}(X\Vert\sigma)^{\theta},
\end{equation}
so that%
\begin{align}
\mu_{\alpha}(X\Vert\sigma)^{\frac{\alpha}{\alpha-1}} &  \leq\left\Vert
X\right\Vert _{1}^{\frac{\left(  1-\theta\right)  \alpha}{\alpha-1}}\cdot
\mu_{\beta}(X\Vert\sigma)^{\frac{\theta\alpha}{\alpha-1}}\\
&  =\left\Vert X\right\Vert _{1}^{\frac{\alpha}{\left(  \alpha-1\right)
}-\frac{\beta}{\left(  \beta-1\right)  }}\cdot\mu_{\beta}(X\Vert\sigma
)^{\frac{\beta}{\beta-1}}.
\end{align}
Rewriting this, we find that%
\begin{equation}
\left[  \frac{\mu_{\alpha}(X\Vert\sigma)}{\left\Vert X\right\Vert _{1}%
}\right]  ^{\frac{\alpha}{\alpha-1}}\leq\left[  \frac{\mu_{\beta}(X\Vert
\sigma)}{\left\Vert X\right\Vert _{1}}\right]  ^{\frac{\beta}{\beta-1}}.
\end{equation}
Taking a logarithm, we get that%
\begin{multline}
\frac{\alpha}{\alpha-1}\left[  \nu_{\alpha}(X\Vert\sigma)-\log_{2}\left\Vert
X\right\Vert _{1}\right]  \\
\leq\frac{\beta}{\beta-1}\left[  \nu_{\beta}(X\Vert\sigma)-\log_{2}\left\Vert
X\right\Vert _{1}\right]  ,
\end{multline}
concluding the proof.
\end{proof}

\begin{corollary}
\label{cor:ordering-mu-alpha}Let $X\neq0$ be a Hermitian operator, and let
$\sigma$ be a positive definite operator. Then the following inequality holds
for all $\beta>\alpha>1$:%
\begin{equation}
\mu_{\alpha}(X\Vert\sigma)\leq\mu_{\beta}(X\Vert\sigma).
\end{equation}
Equivalently,%
\begin{equation}
\nu_{\alpha}(X\Vert\sigma)\leq\nu_{\beta}(X\Vert\sigma).
\end{equation}

\end{corollary}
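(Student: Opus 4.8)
The plan is to obtain the ordering as a short consequence of the two immediately preceding results: Lemma~\ref{lem:monotone-lemma} supplies an inequality between \emph{unequal} powers of $\mu_{\alpha}(X\Vert\sigma)$ and $\mu_{\beta}(X\Vert\sigma)$, while Lemma~\ref{lem:mu-alpha-to-1-ineq} pins down the scale that is needed to convert that into a genuine ordering. Because $\sigma$ is positive definite we automatically have $\operatorname{supp}(X)\subseteq\operatorname{supp}(\sigma)$, so every $\mu_{\alpha}(X\Vert\sigma)$ is finite, and after applying $\log_{2}$ the claimed inequality $\nu_{\alpha}(X\Vert\sigma)\le\nu_{\beta}(X\Vert\sigma)$ is equivalent to $\mu_{\alpha}(X\Vert\sigma)\le\mu_{\beta}(X\Vert\sigma)$.

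First I would introduce the shorthand $a:=\mu_{\alpha}(X\Vert\sigma)/\Vert X\Vert_{1}$ and $b:=\mu_{\beta}(X\Vert\sigma)/\Vert X\Vert_{1}$, together with the exponents $p:=\alpha/(\alpha-1)$ and $q:=\beta/(\beta-1)$. Since $x\mapsto x/(x-1)=1+1/(x-1)$ is strictly decreasing on $(1,\infty)$ and $\beta>\alpha$, we have $p>q>1$. In this notation Lemma~\ref{lem:monotone-lemma} states precisely that $a^{p}\le b^{q}$.

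The crucial input is the bound $a\ge 1$, that is, $\mu_{\alpha}(X\Vert\sigma)\ge\Vert X\Vert_{1}$. This is exactly what Lemma~\ref{lem:mu-alpha-to-1-ineq} provides once $\sigma$ is normalized: with $\operatorname{Tr}[\sigma]=1$ that lemma reads $\Vert X\Vert_{1}\le\mu_{\alpha}(X\Vert\sigma)$, giving $a\ge 1$ (and likewise $b\ge 1$). Granting $a\ge 1$, the finish is immediate: monotonicity of $t\mapsto a^{t}$ for $a\ge 1$ together with $q<p$ yields $a^{q}\le a^{p}\le b^{q}$, and applying the increasing map $t\mapsto t^{1/q}$ gives $a\le b$, i.e.\ $\mu_{\alpha}(X\Vert\sigma)\le\mu_{\beta}(X\Vert\sigma)$; taking $\log_{2}$ then delivers $\nu_{\alpha}(X\Vert\sigma)\le\nu_{\beta}(X\Vert\sigma)$. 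I expect the only real subtlety to be exactly this scale-fixing step: the raw estimate $a^{p}\le b^{q}$ compares two different powers and cannot be converted into $a\le b$ without knowing that the base $a$ is at least $1$, which is why invoking Lemma~\ref{lem:mu-alpha-to-1-ineq} (and hence the normalization of $\sigma$ it exploits) is indispensable.
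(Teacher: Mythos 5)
Your argument follows the same skeleton as the paper's own proof of Corollary~\ref{cor:ordering-mu-alpha}---Lemma~\ref{lem:monotone-lemma} combined with the fact that $\gamma\mapsto\gamma/(\gamma-1)$ is decreasing---but you add a step the paper omits, and you are right that it is indispensable rather than pedantic. The paper's proof is a one-liner: from $a^{p}\le b^{q}$ with $p=\alpha/(\alpha-1)\ge q=\beta/(\beta-1)$ it passes directly to $a\le b$. That passage is valid only if one also knows $a\ge 1$, i.e.\ $\mu_{\alpha}(X\Vert\sigma)\ge\Vert X\Vert_{1}$, which is exactly what your invocation of Lemma~\ref{lem:mu-alpha-to-1-ineq} supplies when $\operatorname{Tr}[\sigma]\le 1$. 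For unnormalized $\sigma$ the corollary as literally stated is in fact false: in one dimension take $X=1$, $\sigma=4$, $\alpha=2$, $\beta=4$, so that
\begin{equation}
\mu_{\alpha}(X\Vert\sigma)=4^{(1-\alpha)/\alpha}=\frac{1}{2}>4^{-3/4}=\mu_{\beta}(X\Vert\sigma),
\end{equation}
even though Lemma~\ref{lem:monotone-lemma} holds there (with equality, $1/4=1/4$). More generally, the scaling identity $\mu_{\alpha}(X\Vert c\sigma)=c^{(1-\alpha)/\alpha}\mu_{\alpha}(X\Vert\sigma)$ has an $\alpha$-dependent exponent, so no ordering can survive arbitrary rescaling of $\sigma$. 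Consequently, your proof establishes the corollary under the additional hypothesis $\operatorname{Tr}[\sigma]\le 1$; this restriction is not a defect of your argument but a necessary hypothesis missing from the statement, and it is satisfied in every place the paper actually uses the corollary (e.g.\ in Proposition~\ref{prop:alpha-log-neg-to-log-neg}, where $\sigma_{AB}$ ranges over states in $\operatorname{PPT}_{\operatorname{inv}}(A:B)$). In short: same route as the paper, but your scale-fixing step repairs a genuine gap in the paper's own one-line proof and correctly identifies the normalization it tacitly relies on.
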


\begin{proof}
This follows easily from the fact that%
\begin{equation}
\frac{\beta}{\beta-1}\leq\frac{\alpha}{\alpha-1}%
\end{equation}
for $\beta>\alpha>1$ and by applying Lemma~\ref{lem:monotone-lemma}.
\end{proof}

\begin{lemma}
\label{lem:convex-mu-alpha}Let $X\neq0$ be a Hermitian operator, and let
$\sigma$ be a positive definite operator. Then for all $\alpha\geq1$, the
following function is convex:%
\begin{equation}
\sigma\mapsto\left[  \mu_{\alpha}(X\Vert\sigma)\right]  ^{\alpha}.
\end{equation}

\end{lemma}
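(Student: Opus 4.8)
The plan is to reduce the statement to a single concavity fact by means of a Legendre-type variational formula, the payoff being that the Hermitian operator $X$ — which carries no positivity assumption — gets pushed entirely into a term that is \emph{linear} in $\sigma$, and hence harmless. First I would dispose of $\alpha=1$: there $\left[\mu_{1}(X\Vert\sigma)\right]^{1}=\Vert X\Vert_{1}$ is independent of $\sigma$, so the map is constant and trivially convex. For $\alpha>1$, set $p=\alpha$ and let $q=\alpha/(\alpha-1)$ be its H\"older conjugate, so that $\tfrac{1-\alpha}{2\alpha}=-\tfrac{1}{2q}$ and
\[ \left[\mu_{\alpha}(X\Vert\sigma)\right]^{\alpha}=\bigl\Vert\sigma^{-1/2q}X\sigma^{-1/2q}\bigr\Vert_{p}^{p}=\operatorname{Tr}\bigl[\,\bigl|\sigma^{-1/2q}X\sigma^{-1/2q}\bigr|^{p}\,\bigr]. \]

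The key tool is the variational identity, valid for any operator $A$ and any $p>1$ (a consequence of H\"older's and Young's inequalities, with the optimizer read off from the polar decomposition $A=U|A|$ as $W=U|A|^{p-1}$):
\[ \operatorname{Tr}\bigl[|A|^{p}\bigr]=\sup_{W}\Bigl(p\,\operatorname{Re}\operatorname{Tr}[W^{\dagger}A]-(p-1)\Vert W\Vert_{q}^{q}\Bigr), \]
the supremum being over all operators $W$. I would apply this with $A=\sigma^{-1/2q}X\sigma^{-1/2q}$ and then reparametrize $W=\sigma^{1/2q}G\sigma^{1/2q}$, which is a bijection on operators since $\sigma>0$. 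By cyclicity of the trace the linear term collapses to $\operatorname{Re}\operatorname{Tr}[G^{\dagger}X]$, which is completely independent of $\sigma$, while the penalty becomes $\Vert W\Vert_{q}=\Vert G\Vert_{q,\sigma}$. Thus
\[ \left[\mu_{\alpha}(X\Vert\sigma)\right]^{\alpha}=\sup_{G}\Bigl(p\,\operatorname{Re}\operatorname{Tr}[G^{\dagger}X]-(p-1)\Vert G\Vert_{q,\sigma}^{q}\Bigr). \]
Since a pointwise supremum of convex functions is convex, and since the $\sigma$-independent term $\operatorname{Re}\operatorname{Tr}[G^{\dagger}X]$ is harmless, it suffices to prove that for each fixed $G$ the map $\sigma\mapsto\Vert G\Vert_{q,\sigma}^{q}=\Vert\sigma^{1/2q}G\sigma^{1/2q}\Vert_{q}^{q}$ is concave on positive definite operators. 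I emphasize that this reparametrization is exactly what frees us from any sign constraint on $X$.

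The main obstacle is this remaining concavity claim: $\sigma\mapsto\operatorname{Tr}[(\sigma^{1/2q}G^{\dagger}\sigma^{1/q}G\sigma^{1/2q})^{q/2}]$ is concave for every $q>1$ and every operator $G$. Unlike the sandwiched R\'enyi functional, here the variable $\sigma$ appears in three sandwiched slots rather than once, so the claim falls outside the reach of elementary operator concavity of $t\mapsto t^{1/q}$ and instead requires the general trace-function convexity machinery of \cite{H16} on which the paper relies; it is also precisely the statement dual to the known joint convexity, for $\alpha>1$, of the trace functional underlying the sandwiched R\'enyi relative entropy $\widetilde{D}_{\alpha}$. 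I would therefore establish it by invoking the appropriate case of \cite{H16} with the parameters matched as above, after which the degree-one homogeneity of $\Vert G\Vert_{q,\sigma}^{q}$ furnishes a useful consistency check. As a fallback that avoids the reduction altogether, one could instead cite \cite{H16} directly for the convexity of $\sigma\mapsto\Vert\sigma^{-1/2q}X\sigma^{-1/2q}\Vert_{p}^{p}$ and observe, in the spirit of the paper's other Hermitian extensions of \cite{B13monotone}, that its proof goes through verbatim when the central operator is merely Hermitian rather than positive semi-definite.
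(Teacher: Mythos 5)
Your proof is correct, but it takes a genuinely different route from the paper's. The paper's argument is a two-line reduction: it writes $\left[\mu_{\alpha}(X\Vert\sigma)\right]^{\alpha}=\operatorname{Tr}\bigl[\bigl(\sigma^{\frac{1-\alpha}{2\alpha}}X\sigma^{\frac{1-\alpha}{\alpha}}X\sigma^{\frac{1-\alpha}{2\alpha}}\bigr)^{\alpha/2}\bigr]$ and invokes the joint-convexity statement of \cite[Theorem~5.2]{H16} with $\mathcal{P}_{1}=\operatorname{id}$, $\mathcal{P}_{2}(\cdot)=X(\cdot)X$, inner exponents $(1-\alpha)/\alpha\in[-1,0]$, and outer power $\alpha/2$; the Hermitian operator $X$ enters only through the map $X(\cdot)X$, which is positive (indeed completely positive) whenever $X$ is Hermitian, so Hiai's theorem applies verbatim with no extension. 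Your route instead linearizes the $X$-dependence via the H\"older--Young variational formula and the substitution $W=\sigma^{1/2q}G\sigma^{1/2q}$, trading the problem for concavity of $\sigma\mapsto\Vert G\Vert_{q,\sigma}^{q}=\operatorname{Tr}\bigl[\bigl(\sigma^{1/2q}G^{\dagger}\sigma^{1/q}G\sigma^{1/2q}\bigr)^{q/2}\bigr]$ with your $q=\alpha/(\alpha-1)$. That remaining claim is true and is covered by the joint-concavity counterpart established in the same paper \cite{H16}: take positive maps $\operatorname{id}$ and $G^{\dagger}(\cdot)G$, both inner exponents equal to $1/q\in(0,1)$, and outer power $q/2$, which sits exactly at the admissible boundary where the outer power equals the reciprocal of the sum of the inner exponents (the same boundary case that underlies concavity of the sandwiched R\'enyi functional for $\alpha<1$). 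So your proof also ultimately rests on \cite{H16}, just on its concavity half rather than its convexity half; it is not more elementary, but it is not circular either. What the detour buys: your representation exhibits $\left[\mu_{\alpha}(X\Vert\sigma)\right]^{\alpha}$ as a supremum of functions affine in $X$ and convex in $\sigma$, so you get joint convexity in $(X,\sigma)$ as a free byproduct, and the argument is in the dual spirit of \cite{B13monotone}. One correction to your closing fallback: citing \cite{H16} directly for $\sigma\mapsto\Vert\sigma^{-1/2q}X\sigma^{-1/2q}\Vert_{p}^{p}$ requires no observation that the proof ``goes through for Hermitian central operators,'' precisely because of the $X(\cdot)X$ positive-map trick above --- that absorption of the Hermitian operator into a positive map is the whole point of the paper's one-step proof.
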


\begin{proof}
Consider that%
\begin{align}
\left[  \mu_{\alpha}(X\Vert\sigma)\right]  ^{\alpha} &  =\left\Vert
\sigma^{\frac{1-\alpha}{2\alpha}}X\sigma^{\frac{1-\alpha}{2\alpha}}\right\Vert
_{\alpha}^{\alpha}\\
&  =\operatorname{Tr}\left[  \left(  \sigma^{\frac{1-\alpha}{2\alpha}}%
X\sigma^{\frac{1-\alpha}{\alpha}}X\sigma^{\frac{1-\alpha}{2\alpha}}\right)
^{\frac{\alpha}{2}}\right]  .
\end{align}
A general theorem \cite[Theorem~5.2]{H16} states that the following function
of positive definite operators $A$ and $B$ is jointly convex in $A$ and $B$:%
\begin{equation}
\operatorname{Tr}\left[  \left(  \left[  \mathcal{P}_{1}(A^{p})\right]
^{1/2}\mathcal{P}_{2}(B^{q})\left[  \mathcal{P}_{1}(A^{p})\right]
^{1/2}\right)  ^{s}\right]  ,
\end{equation}
for positive maps $\mathcal{P}_{1}$ and $\mathcal{P}_{2}$, $p,q\in\left[
-1,0\right]  $, and $s\geq0$. (For this statement, please consult Theorem~5.2
of \cite{H16} and the brief remarks stated before Theorem~5.3 therein). Then
we see that the convexity of $\left[  \mu_{\alpha}(X\Vert\sigma)\right]
^{\alpha}$ in $\sigma$ for $\alpha\geq1$ follows as a special case of
\cite[Theorem~5.2]{H16} by taking%
\begin{align}
\mathcal{P}_{1} &  =\operatorname{id},\\
\mathcal{P}_{2}(\cdot) &  =X\left(  \cdot\right)  X,\\
p &  =q=\left(  1-\alpha\right)  /\alpha,\\
s &  =\alpha/2,\\
A &  =B=\sigma,
\end{align}
concluding the proof.
\end{proof}

\section{Ordering of the $\alpha$-logarithmic negativity}

\label{sec:ordering-alpha-log-neg}Recall that the logarithmic negativity of a
bipartite state $\rho_{AB}$ is defined as \cite{Vidal2002,Plenio2005b}%
\begin{equation}
E_{N}(\rho_{AB})\equiv\log_{2}\left\Vert T_{B}(\rho_{AB})\right\Vert _{1}.
\end{equation}

\begin{proposition}
\label{prop:alpha-log-neg-to-log-neg}Let $\rho_{AB}$ be a bipartite quantum
state, and let $1\leq\alpha\leq\beta$. Then%
\begin{equation}
E_{N}(\rho_{AB})\leq E_{N}^{\alpha}(\rho_{AB})\leq E_{N}^{\beta}(\rho_{AB}).
\label{eq:alpha-log-neg->=log-neg}%
\end{equation}

\end{proposition}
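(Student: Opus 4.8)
The plan is to reduce both inequalities to the pointwise statements about $\nu_{\alpha}(X\Vert\sigma)$ established earlier in Section~\ref{sec:props-mu-nu}, where $X$ is a nonzero Hermitian operator and $\sigma$ is \emph{positive definite}. The main obstacle is precisely this positive-definiteness hypothesis: Lemma~\ref{lem:mu-alpha-to-1-ineq} and Corollary~\ref{cor:ordering-mu-alpha} require $\sigma$ to be invertible, whereas the infimum in Definition~\ref{def:alpha-log-neg} ranges over all of $\operatorname{PPT}(A:B)$, which contains non-faithful states. I would therefore first invoke Lemma~\ref{lem:inf-invertible-states} to rewrite both $E_{N}^{\alpha}(\rho_{AB})$ and $E_{N}^{\beta}(\rho_{AB})$ as infima over $\operatorname{PPT}_{\operatorname{inv}}(A:B)$. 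Then the underlying $\sigma_{AB}$ is positive definite and $X=T_{B}(\rho_{AB})\neq0$ is Hermitian (it has unit trace), so the earlier results apply directly to each term in the infimum.

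For the left inequality $E_{N}(\rho_{AB})\leq E_{N}^{\alpha}(\rho_{AB})$, I would fix an arbitrary $\sigma_{AB}\in\operatorname{PPT}_{\operatorname{inv}}(A:B)$ and apply Lemma~\ref{lem:mu-alpha-to-1-ineq} with $X=T_{B}(\rho_{AB})$. Since $\operatorname{Tr}[\sigma_{AB}]=1$, the correction term $\tfrac{\alpha-1}{\alpha}\log_{2}\operatorname{Tr}[\sigma_{AB}]$ vanishes, and the lemma yields $E_{N}(\rho_{AB})=\log_{2}\left\Vert T_{B}(\rho_{AB})\right\Vert_{1}\leq\nu_{\alpha}(T_{B}(\rho_{AB})\Vert\sigma_{AB})$. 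As this holds for every faithful PPT state $\sigma_{AB}$, taking the infimum over $\operatorname{PPT}_{\operatorname{inv}}(A:B)$ and invoking Lemma~\ref{lem:inf-invertible-states} gives $E_{N}(\rho_{AB})\leq E_{N}^{\alpha}(\rho_{AB})$ for all $\alpha\geq1$.

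For the right inequality $E_{N}^{\alpha}(\rho_{AB})\leq E_{N}^{\beta}(\rho_{AB})$, I would split on the value of $\alpha$. When $\alpha=1$, the quantity $\nu_{1}(T_{B}(\rho_{AB})\Vert\sigma_{AB})=\log_{2}\left\Vert T_{B}(\rho_{AB})\right\Vert_{1}$ is independent of $\sigma_{AB}$, so $E_{N}^{1}(\rho_{AB})=E_{N}(\rho_{AB})$ and the claim reduces to the left inequality already established. When $1<\alpha\leq\beta$ (the case $\alpha=\beta$ being trivial), Corollary~\ref{cor:ordering-mu-alpha} gives, for each fixed positive definite $\sigma_{AB}$, the pointwise bound $\nu_{\alpha}(T_{B}(\rho_{AB})\Vert\sigma_{AB})\leq\nu_{\beta}(T_{B}(\rho_{AB})\Vert\sigma_{AB})$. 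Taking the infimum over $\sigma_{AB}\in\operatorname{PPT}_{\operatorname{inv}}(A:B)$ on both sides—monotonicity of the infimum preserves the inequality—and again applying Lemma~\ref{lem:inf-invertible-states} yields $E_{N}^{\alpha}(\rho_{AB})\leq E_{N}^{\beta}(\rho_{AB})$. No genuinely hard step remains once the reduction to faithful PPT states is in place; the only points requiring care are confirming $T_{B}(\rho_{AB})\neq0$ so that $\nu_{\alpha}$ is well defined, and exploiting the normalization $\operatorname{Tr}[\sigma_{AB}]=1$ that removes the correction term in Lemma~\ref{lem:mu-alpha-to-1-ineq}.
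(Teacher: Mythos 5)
Your proof is correct and follows essentially the same route as the paper's: Lemma~\ref{lem:mu-alpha-to-1-ineq} (with $\operatorname{Tr}[\sigma_{AB}]=1$ killing the correction term) for the left inequality, Corollary~\ref{cor:ordering-mu-alpha} for the right, and Lemma~\ref{lem:inf-invertible-states} to reduce to faithful PPT states. Your explicit case split at $\alpha=1$---where Corollary~\ref{cor:ordering-mu-alpha} does not literally apply, since it is stated for $\beta>\alpha>1$---is a point the paper glosses over, and is a welcome bit of extra care.
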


\begin{proof}
Let $\sigma_{AB}$ be an arbitrary state in $\operatorname{PPT}%
_{\operatorname{inv}}(A:B)$. Applying Lemma~\ref{lem:mu-alpha-to-1-ineq}, we
find that%
\begin{align}
E_{N}(\rho_{AB})  &  =\log_{2}\left\Vert T_{B}(\rho_{AB})\right\Vert _{1}\\
&  \leq\nu_{\alpha}(T_{B}(\rho_{AB})\Vert\sigma_{AB}).
\end{align}
Since the inequality holds for all $\sigma_{AB}\in\operatorname{PPT}%
_{\operatorname{inv}}(A:B)$, by applying Lemma~\ref{lem:inf-invertible-states}
and an infimum, we conclude the first inequality in \eqref{eq:alpha-log-neg->=log-neg}.

To establish the second inequality in \eqref{eq:alpha-log-neg->=log-neg}, let
$\sigma_{AB}$ be an arbitrary state in $\operatorname{PPT}_{\operatorname{inv}%
}(A:B)$. Then applying Definition~\ref{def:alpha-log-neg}\ and
Corollary~\ref{cor:ordering-mu-alpha}, we find that%
\begin{align}
E_{N}^{\alpha}(\rho_{AB})  &  \leq\nu_{\alpha}(T_{B}(\rho_{AB})\Vert
\sigma_{AB})\\
&  \leq\nu_{\beta}(T_{B}(\rho_{AB})\Vert\sigma_{AB}).
\end{align}
Since the inequality holds for all $\sigma_{AB}\in\operatorname{PPT}%
_{\operatorname{inv}}(A:B)$, we conclude that $E_{N}^{\alpha}(\rho_{AB})\leq
E_{N}^{\beta}(\rho_{AB})$.
\end{proof}

\section{Limits of the $\alpha$-logarithmic negativity}

\label{sec:alpha-log-neg-limits}In this section, we consider two limits of the
$\alpha$-logarithmic negativity, when $\alpha\rightarrow1$ and $\alpha
\rightarrow\infty$. The former limit converges to the well known logarithmic
negativity \cite{Vidal2002,Plenio2005b} and the latter converges to the
$\kappa$-entanglement of a bipartite quantum state \cite{WW18,WW20}. Recall that
the $\kappa$-entanglement of a quantum state is defined as \cite{WW18,WW20}%
\begin{multline}
E_{\kappa}(\rho_{AB})\equiv\log_{2}\inf\{\operatorname{Tr}[S_{AB}%
]:\label{eq:kappa-ent}\\
-T_{B}(S_{AB})\leq T_{B}(\rho_{AB})\leq T_{B}(S_{AB}),\ S_{AB}\geq0\}.
\end{multline}

\begin{proposition}
\label{prop:alpha-1-log-neg}Let $\rho_{AB}$ be a bipartite quantum state. Then%
\begin{equation}
\lim_{\alpha\rightarrow1}E_{N}^{\alpha}(\rho_{AB})=E_{N}(\rho_{AB}).
\end{equation}

\end{proposition}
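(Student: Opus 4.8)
The plan is to sandwich $\lim_{\alpha\to1}E_N^\alpha(\rho_{AB})$ between two copies of $E_N(\rho_{AB})$. The lower bound is immediate from what is already established: Proposition~\ref{prop:alpha-log-neg-to-log-neg} gives $E_N(\rho_{AB})\leq E_N^\alpha(\rho_{AB})$ for every $\alpha\geq1$, so that $\liminf_{\alpha\to1}E_N^\alpha(\rho_{AB})\geq E_N(\rho_{AB})$. In fact, since the same proposition shows the family to be monotone non-decreasing in $\alpha$ and bounded below by $E_N(\rho_{AB})$, the limit as $\alpha\searrow1$ certainly exists and equals $\inf_{\alpha>1}E_N^\alpha(\rho_{AB})$; the only real work is to show that this infimum does not exceed $E_N(\rho_{AB})$.

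For the matching upper bound I would fix a single faithful state $\sigma_{AB}\in\operatorname{PPT}_{\operatorname{inv}}(A:B)$ (for instance the maximally mixed state, which is positive definite and PPT) and use it as one feasible point in the infimum defining $E_N^\alpha$ in Definition~\ref{def:alpha-log-neg}. This yields $E_N^\alpha(\rho_{AB})\leq\nu_\alpha(T_B(\rho_{AB})\Vert\sigma_{AB})$ for all $\alpha>1$. The key observation is that at $\alpha=1$ the prefactors $\sigma_{AB}^{(1-\alpha)/2\alpha}$ collapse to the identity on the support of $\sigma_{AB}$, which is the full space because $\sigma_{AB}$ is faithful; hence $\mu_1(T_B(\rho_{AB})\Vert\sigma_{AB})=\Vert T_B(\rho_{AB})\Vert_1$ and therefore $\nu_1(T_B(\rho_{AB})\Vert\sigma_{AB})=\log_2\Vert T_B(\rho_{AB})\Vert_1=E_N(\rho_{AB})$. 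Taking $\alpha\searrow1$ in the displayed bound would then give $\limsup_{\alpha\to1}E_N^\alpha(\rho_{AB})\leq E_N(\rho_{AB})$, completing the sandwich.

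The single step requiring care is the continuity of $\alpha\mapsto\nu_\alpha(T_B(\rho_{AB})\Vert\sigma_{AB})$ at $\alpha=1$ for the chosen faithful $\sigma_{AB}$, which is precisely why faithfulness is needed. Since $\sigma_{AB}>0$, the operator $\sigma_{AB}^{(1-\alpha)/2\alpha}$ depends continuously (indeed real-analytically) on $\alpha$ in a neighborhood of $\alpha=1$, so that $\sigma_{AB}^{(1-\alpha)/2\alpha}\,T_B(\rho_{AB})\,\sigma_{AB}^{(1-\alpha)/2\alpha}$ varies continuously with $\alpha$; combined with the joint continuity of the Schatten norm $\Vert\cdot\Vert_\alpha$ in both its argument and the index $\alpha$ on the finite-dimensional space at hand, this gives $\lim_{\alpha\to1}\mu_\alpha(T_B(\rho_{AB})\Vert\sigma_{AB})=\Vert T_B(\rho_{AB})\Vert_1$. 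Because $T_B(\rho_{AB})\neq0$ forces $\Vert T_B(\rho_{AB})\Vert_1>0$, the logarithm is continuous at this value and transfers the statement from $\mu_\alpha$ to $\nu_\alpha$. I expect this continuity verification to be the main (though routine) obstacle, with everything else following directly from Proposition~\ref{prop:alpha-log-neg-to-log-neg} and the definitions.
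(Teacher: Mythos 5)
Your proposal is correct and follows essentially the same route as the paper: the lower bound comes from the ordering in Proposition~\ref{prop:alpha-log-neg-to-log-neg}, and the upper bound comes from fixing a single faithful state $\sigma_{AB}\in\operatorname{PPT}_{\operatorname{inv}}(A:B)$ as a feasible point and using continuity of $\alpha\mapsto\nu_{\alpha}(T_{B}(\rho_{AB})\Vert\sigma_{AB})$ at $\alpha=1$. The only difference is cosmetic: you spell out the continuity verification that the paper leaves implicit when it passes from $\lim_{\alpha\rightarrow1}\log_{2}\Vert\sigma_{AB}^{(1-\alpha)/2\alpha}T_{B}(\rho_{AB})\sigma_{AB}^{(1-\alpha)/2\alpha}\Vert_{\alpha}$ to $\log_{2}\Vert T_{B}(\rho_{AB})\Vert_{1}$.
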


\begin{proof}
Let $\sigma_{AB}$ be a state in $\operatorname{PPT}_{\operatorname{inv}}%
(A:B)$. Then%
\begin{align}
&  \lim_{\alpha\rightarrow1}E_{N}^{\alpha}(\rho_{AB})\nonumber\\
&  =\lim_{\alpha\rightarrow1}\inf_{\sigma_{AB}\in\operatorname{PPT}%
_{\operatorname{inv}}(A:B)}\nu_{\alpha}(T_{B}(\rho_{AB})\Vert\sigma_{AB})\\
&  \leq\lim_{\alpha\rightarrow1}\nu_{\alpha}(T_{B}(\rho_{AB})\Vert\sigma
_{AB})\\
&  =\lim_{\alpha\rightarrow1}\log_{2}\left\Vert \sigma_{AB}^{\left(
1-\alpha\right)  /2\alpha}T_{B}(\rho_{AB})\sigma_{AB}^{\left(  1-\alpha
\right)  /2\alpha}\right\Vert _{\alpha}\\
&  =\log_{2}\left\Vert T_{B}(\rho_{AB})\right\Vert _{1}\\
&  =E_{N}(\rho_{AB}).
\end{align}
Combining with the inequality $E_{N}^{\alpha}(\rho_{AB})\geq E_{N}(\rho_{AB})$
from Proposition~\ref{prop:alpha-log-neg-to-log-neg}, we conclude the proof.
\end{proof}

\begin{definition}
[Max-logarithmic negativity]For $\rho_{AB}$ a bipartite quantum state, the
max-logarithmic negativity $E_{N}^{\max}(\rho_{AB})$ is defined as%
\begin{equation}
E_{N}^{\max}(\rho_{AB})\equiv\inf_{\sigma_{AB}\in\operatorname{PPT}(A:B)}%
\nu_{\infty}(T_{B}(\rho_{AB})\Vert\sigma_{AB}),
\end{equation}
and $\nu_{\infty}$ is defined in \eqref{eq:nu-infty}.
\end{definition}

\begin{proposition}
\label{prop:E-kappa-E-N-max}Let $\rho_{AB}$ be a bipartite quantum state. Then%
\begin{align}
E_{\kappa}(\rho_{AB}) =E_{N}^{\max}(\rho_{AB}) =\lim_{\alpha\rightarrow\infty
}E_{N}^{\alpha}(\rho_{AB}). \label{eq:e-kappa-e-max-N}%
\end{align}

\end{proposition}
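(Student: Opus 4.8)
The plan is to prove the two equalities in \eqref{eq:e-kappa-e-max-N} separately: first the identity $E_{\kappa}(\rho_{AB})=E_{N}^{\max}(\rho_{AB})$, which is an exact algebraic rewriting that uses only the variational form of $\nu_{\infty}$ together with the fact that the partial transpose is a trace-preserving involution; and then the limit $E_{N}^{\max}(\rho_{AB})=\lim_{\alpha\rightarrow\infty}E_{N}^{\alpha}(\rho_{AB})$, which is an interchange of an infimum with a monotone limit and is the real work.

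For the first equality, I would unfold $E_{N}^{\max}$ using $\nu_{\infty}(X\Vert\sigma)=\log_{2}\inf\{\lambda:-\lambda\sigma\leq X\leq\lambda\sigma\}$ with $X=T_{B}(\rho_{AB})$, and then absorb the scalar into the state by setting $S:=\lambda\sigma$. Since $\sigma$ ranges over $\operatorname{PPT}(A:B)$ exactly when $S$ ranges over operators obeying $S\geq0$, $T_{B}(S)\geq0$, $\operatorname{Tr}[S]=\lambda>0$, this yields $E_{N}^{\max}(\rho_{AB})=\log_{2}\inf\{\operatorname{Tr}[S]:S\geq0,\ T_{B}(S)\geq0,\ -S\leq T_{B}(\rho_{AB})\leq S\}$. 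I would then substitute $S=T_{B}(S^{\prime})$; because $T_{B}\circ T_{B}=\operatorname{id}$ and $T_{B}$ preserves the trace, this is a bijection of the feasible set onto $\{S^{\prime}:S^{\prime}\geq0,\ -T_{B}(S^{\prime})\leq T_{B}(\rho_{AB})\leq T_{B}(S^{\prime})\}$, which is precisely the feasible set of \eqref{eq:kappa-ent} (the leftover condition $T_{B}(S^{\prime})\geq0$ being automatic, since summing the two sandwich inequalities forces it). This establishes $E_{N}^{\max}(\rho_{AB})=E_{\kappa}(\rho_{AB})$.

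For the second equality, the inequality $\lim_{\alpha\rightarrow\infty}E_{N}^{\alpha}\leq E_{N}^{\max}$ is immediate: for each fixed $\sigma\in\operatorname{PPT}(A:B)$, Corollary~\ref{cor:ordering-mu-alpha} and the definition of $\nu_{\infty}$ as the monotone limit give $\nu_{\alpha}(T_{B}(\rho_{AB})\Vert\sigma)\leq\nu_{\infty}(T_{B}(\rho_{AB})\Vert\sigma)$, so taking the infimum over $\sigma$ and using that $E_{N}^{\alpha}$ is nondecreasing in $\alpha$ (Proposition~\ref{prop:alpha-log-neg-to-log-neg}) bounds the monotone limit by $E_{N}^{\max}$. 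The reverse inequality is the substantive step, and I would prove it by compactness. For each $\alpha$ I choose a near-optimal $\sigma_{\alpha}\in\operatorname{PPT}(A:B)$ with $\nu_{\alpha}(T_{B}(\rho_{AB})\Vert\sigma_{\alpha})\leq E_{N}^{\alpha}+o(1)$; since $\operatorname{PPT}(A:B)$ is compact, I pass to a subsequence $\alpha_{k}\rightarrow\infty$ with $\sigma_{\alpha_{k}}\rightarrow\sigma^{\star}\in\operatorname{PPT}(A:B)$. For any fixed $\beta$ and all large $k$, Corollary~\ref{cor:ordering-mu-alpha} gives $E_{N}^{\alpha_{k}}\geq\nu_{\alpha_{k}}(T_{B}(\rho_{AB})\Vert\sigma_{\alpha_{k}})-o(1)\geq\nu_{\beta}(T_{B}(\rho_{AB})\Vert\sigma_{\alpha_{k}})-o(1)$, and letting $k\rightarrow\infty$ and then $\beta\rightarrow\infty$ yields $\lim_{\alpha\rightarrow\infty}E_{N}^{\alpha}\geq\sup_{\beta}\nu_{\beta}(T_{B}(\rho_{AB})\Vert\sigma^{\star})=\nu_{\infty}(T_{B}(\rho_{AB})\Vert\sigma^{\star})\geq E_{N}^{\max}$.

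The main obstacle is the lower semicontinuity of $\sigma\mapsto\nu_{\beta}(T_{B}(\rho_{AB})\Vert\sigma)$ required to conclude $\liminf_{k}\nu_{\beta}(T_{B}(\rho_{AB})\Vert\sigma_{\alpha_{k}})\geq\nu_{\beta}(T_{B}(\rho_{AB})\Vert\sigma^{\star})$ from $\sigma_{\alpha_{k}}\rightarrow\sigma^{\star}$. The delicate case is the support condition: if $\operatorname{supp}(T_{B}(\rho_{AB}))\not\subseteq\operatorname{supp}(\sigma^{\star})$ then $\nu_{\beta}=+\infty$, and one must verify that the quantity genuinely diverges along the approximating sequence rather than jumping. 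I would handle this by invoking the convexity of $\sigma\mapsto[\mu_{\beta}(T_{B}(\rho_{AB})\Vert\sigma)]^{\beta}$ from Lemma~\ref{lem:convex-mu-alpha} together with the limiting characterization in \eqref{eq:limit-mu-alpha-support}, which together show that the extended-real-valued function (with the $+\infty$ convention on the boundary) is a closed convex function and hence lower semicontinuous on the compact set $\operatorname{PPT}(A:B)$; this also guarantees that the infima defining the $E_{N}^{\alpha}$ are attained, so the near-optimizers $\sigma_{\alpha}$ may be taken to be exact minimizers if desired.
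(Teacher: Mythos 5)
Your proposal is correct, but it takes a genuinely different route from the paper on the substantive half of the statement. For the first equality $E_{\kappa}(\rho_{AB})=E_{N}^{\max}(\rho_{AB})$ your argument is essentially the paper's run in reverse: the paper substitutes $S_{AB}\rightarrow T_{B}(S_{AB})$ in \eqref{eq:kappa-ent} and then scales $S_{AB}=\mu\sigma_{AB}$, while you unfold $\nu_{\infty}$, absorb the scalar, and substitute $S=T_{B}(S^{\prime})$ --- the same involution-plus-scaling trick, including the same observation that one positivity constraint is forced by the operator sandwich. For the second equality, however, the paper replaces $\lim_{\alpha\rightarrow\infty}$ by $\sup_{\alpha}$ via the ordering in Proposition~\ref{prop:alpha-log-neg-to-log-neg} and then exchanges $\sup_{\alpha}$ with $\inf_{\sigma}$ by citing the minimax theorem of \cite[Corollary~A.2]{MH11}, with quasi-convexity in $\sigma$, compactness and convexity of $\operatorname{PPT}(A:B)$, and monotonicity in $\alpha$ as the hypotheses; you instead prove the nontrivial direction $\lim_{\alpha}E_{N}^{\alpha}\geq E_{N}^{\max}$ by hand, via near-optimizers, a convergent subsequence from compactness of $\operatorname{PPT}(A:B)$, the ordering of Corollary~\ref{cor:ordering-mu-alpha} to drop to a fixed index $\beta$, and lower semicontinuity to pass to the limit point. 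This amounts to re-proving, in this special case, the directed-family minimax statement the paper invokes as a black box: what your route buys is a self-contained argument with the topological ingredients made explicit, at the cost of having to establish lower semicontinuity --- which is exactly the hypothesis the cited theorem quietly absorbs. Your resolution of that point is sound: convexity of $\sigma\mapsto\left[\mu_{\beta}(X\Vert\sigma)\right]^{\beta}$ from Lemma~\ref{lem:convex-mu-alpha}, combined with the radial-limit property \eqref{eq:limit-mu-alpha-support}, implies the extended-real-valued function agrees with its closure (Rockafellar's theorem on radial limits of convex functions) and is therefore closed, hence lower semicontinuous, and lsc on a compact set also gives your remark about attainment of the infima. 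If you write this up, make two small points explicit: (i) Lemma~\ref{lem:convex-mu-alpha} and Corollary~\ref{cor:ordering-mu-alpha} are stated for positive definite $\sigma$, so either take the near-optimizers $\sigma_{\alpha}$ in $\operatorname{PPT}_{\operatorname{inv}}(A:B)$ using Lemma~\ref{lem:inf-invertible-states}, or extend both statements to singular $\sigma$ satisfying the support condition by the approximation in \eqref{eq:limit-mu-alpha-support}; and (ii) you need the radial-limit property also at points where the value is $+\infty$ (support condition violated), which is part of what \eqref{eq:limit-mu-alpha-support} asserts, so it is worth saying so.
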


\begin{proof}
We first prove the first equality in \eqref{eq:e-kappa-e-max-N}. Consider that
by the substitution $S_{AB}\rightarrow T_{B}(S_{AB})$ in \eqref{eq:kappa-ent},
we find that%
\begin{multline}
E_{\kappa}(\rho_{AB})=\log_{2}\inf\{\operatorname{Tr}[T_{B}(S_{AB})]:\\
-S_{AB}\leq T_{B}(\rho_{AB})\leq S_{AB},\ T_{B}(S_{AB})\geq0\}.
\end{multline}
Since $\operatorname{Tr}[T_{B}(S_{AB})]=\operatorname{Tr}[S_{AB}]$, it follows
that%
\begin{multline}
E_{\kappa}(\rho_{AB})=\log_{2}\inf\{\operatorname{Tr}[S_{AB}%
]:\label{eq:E-kappa-subst-proof}\\
-S_{AB}\leq T_{B}(\rho_{AB})\leq S_{AB},\ T_{B}(S_{AB})\geq0\}.
\end{multline}
From the condition $-S_{AB}\leq T_{B}(\rho_{AB})\leq S_{AB}$, it follows that
$-S_{AB}\leq S_{AB}$ and thus that $S_{AB}\geq0$. By approximation (the fact
that positive definite operators are dense in the set of positive
semi-definite ones), it follows that we can take the infimum over $S_{AB}>0$.
Now make the substitution $S_{AB}\rightarrow\mu\sigma_{AB}$ in
\eqref{eq:E-kappa-subst-proof}, where $\mu>0$ and $\sigma_{AB}\in
\operatorname{PPT}_{\operatorname{inv}}(A:B)$. Then we find that%
\begin{multline}
E_{\kappa}(\rho_{AB})=\log_{2}\inf\{\mu:\\
-\mu\sigma_{AB}\leq T_{B}(\rho_{AB})\leq\mu\sigma_{AB},\\
\sigma_{AB}\in\operatorname{PPT}_{\operatorname{inv}}(A:B)\}.
\end{multline}
The condition $-\mu\sigma_{AB}\leq T_{B}(\rho_{AB})\leq\mu\sigma_{AB}$ is
equivalent to $-\mu I_{AB}\leq\sigma_{AB}^{-1/2}T_{B}(\rho_{AB})\sigma
_{AB}^{-1/2}\leq\mu I_{AB}$, which is in turn equivalent to%
\begin{equation}
\left\Vert \sigma_{AB}^{-1/2}T_{B}(\rho_{AB})\sigma_{AB}^{-1/2}\right\Vert
_{\infty}\leq\mu.
\end{equation}
So then%
\begin{multline}
E_{\kappa}(\rho_{AB})=\log_{2}\inf\Big\{\mu:
\left\Vert \sigma_{AB}^{-1/2}T_{B}(\rho_{AB})\sigma_{AB}^{-1/2}\right\Vert
_{\infty}\leq\mu,\\ \sigma_{AB}\in\operatorname{PPT}_{\operatorname{inv}%
}(A:B)\Big\}.
\end{multline}
Then it follows that%
\begin{align}
&  E_{\kappa}(\rho_{AB})\nonumber\\
&  =\inf_{\sigma_{AB}\in\operatorname{PPT}_{\operatorname{inv}}(A:B)}\log
_{2}\left\Vert \sigma_{AB}^{-1/2}T_{B}(\rho_{AB})\sigma_{AB}^{-1/2}\right\Vert
_{\infty}\\
&  =\inf_{\sigma_{AB}\in\operatorname{PPT}_{\operatorname{inv}}(A:B)}%
\nu_{\infty}(T_{B}(\rho_{AB})\Vert\sigma_{AB}),
\end{align}
thus establishing \eqref{eq:e-kappa-e-max-N}.

Now we establish the second equality in \eqref{eq:e-kappa-e-max-N}. Consider
that%
\begin{align}
&  \lim_{\alpha\rightarrow\infty}E_{N}^{\alpha}(\rho_{AB})\nonumber\\
&  =\sup_{\alpha\in\lbrack1,\infty)}\inf_{\sigma_{AB}\in\operatorname{PPT}%
(A:B)}\nu_{\alpha}(T_{B}(\rho_{AB})\Vert\sigma_{AB})\\
&  =\inf_{\sigma_{AB}\in\operatorname{PPT}(A:B)}\sup_{\alpha\in\lbrack
1,\infty)}\nu_{\alpha}(T_{B}(\rho_{AB})\Vert\sigma_{AB})\\
&  =\inf_{\sigma_{AB}\in\operatorname{PPT}(A:B)}\nu_{\infty}(T_{B}(\rho
_{AB})\Vert\sigma_{AB})\\
&  =E_{N}^{\max}(\rho_{AB}).
\end{align}
The first equality follows from the ordering inequality from
Proposition~\ref{prop:alpha-log-neg-to-log-neg}, using which we can replace
$\lim_{\alpha\rightarrow\infty}$ with $\sup_{\alpha\in\lbrack1,\infty)}$. The
second (critical)\ equality above is a consequence of the ordering inequality
from Proposition~\ref{prop:alpha-log-neg-to-log-neg}, the quasi-convexity of
$\nu_{\alpha}(T_{B}(\rho_{AB})\Vert\sigma_{AB})$ with respect to $\sigma_{AB}$
(Proposition~\ref{prop:convex-opt}), and the convexity and compactness of the
set $\operatorname{PPT}(A:B)$. All of these properties allow for applying the
minimax theorem from \cite[Corollary~A.2]{MH11}, concluding the proof of the
second equality in \eqref{eq:e-kappa-e-max-N}.
\end{proof}

\bigskip Putting together previous results, we conclude the following:

\begin{proposition}
If $\rho_{AB}$ satisfies the condition $T_{B}(|T_{B}(\rho_{AB})|)\geq0$, then
all $\alpha$-logarithmic negativities are equal; i.e., the following equality
holds for all $\alpha\geq1$:%
\begin{equation}
E_{N}^{\alpha}(\rho_{AB})=E_{N}(\rho_{AB}). \label{eq:negativity-collapse}%
\end{equation}

\end{proposition}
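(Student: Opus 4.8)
The plan is to leverage the ordering already established and reduce the claim to a single inequality at the top end of the family. By Proposition~\ref{prop:alpha-log-neg-to-log-neg} we have $E_N(\rho_{AB}) \leq E_N^{\alpha}(\rho_{AB})$ for every $\alpha \geq 1$, and combining that ordering with Proposition~\ref{prop:E-kappa-E-N-max} gives $E_N^{\alpha}(\rho_{AB}) \leq \lim_{\beta \to \infty} E_N^{\beta}(\rho_{AB}) = E_N^{\max}(\rho_{AB}) = E_{\kappa}(\rho_{AB})$. Thus every $E_N^{\alpha}(\rho_{AB})$ is sandwiched between $E_N(\rho_{AB})$ and $E_{\kappa}(\rho_{AB})$, and it suffices to prove the single reverse inequality $E_{\kappa}(\rho_{AB}) \leq E_N(\rho_{AB})$; equality throughout then forces \eqref{eq:negativity-collapse} for all $\alpha \geq 1$.

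To bound $E_{\kappa}$ from above I would exhibit an explicit feasible operator in the optimization defining it. The most convenient formulation is the one derived midway through the proof of Proposition~\ref{prop:E-kappa-E-N-max}, namely \eqref{eq:E-kappa-subst-proof}, in which $E_{\kappa}(\rho_{AB})$ is the logarithm of the infimum of $\operatorname{Tr}[S_{AB}]$ subject to $-S_{AB} \leq T_{B}(\rho_{AB}) \leq S_{AB}$ and $T_{B}(S_{AB}) \geq 0$. The natural candidate is $S_{AB} = |T_{B}(\rho_{AB})|$. Its trace equals $\operatorname{Tr}[|T_{B}(\rho_{AB})|] = \|T_{B}(\rho_{AB})\|_{1}$, which is precisely the quantity inside the logarithm defining $E_N(\rho_{AB})$, so once feasibility is confirmed we immediately obtain $E_{\kappa}(\rho_{AB}) \leq \log_{2}\|T_{B}(\rho_{AB})\|_{1} = E_N(\rho_{AB})$.

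It then remains to check feasibility of $S_{AB} = |T_{B}(\rho_{AB})|$. The constraint $-|T_{B}(\rho_{AB})| \leq T_{B}(\rho_{AB}) \leq |T_{B}(\rho_{AB})|$ holds automatically for any Hermitian operator: writing $T_{B}(\rho_{AB}) = P - N$ in terms of its positive and negative parts $P, N \geq 0$ gives $|T_{B}(\rho_{AB})| = P + N$, whence $|T_{B}(\rho_{AB})| - T_{B}(\rho_{AB}) = 2N \geq 0$ and $|T_{B}(\rho_{AB})| + T_{B}(\rho_{AB}) = 2P \geq 0$ (and in particular $S_{AB} \geq 0$, as required implicitly). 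The only remaining constraint is $T_{B}(S_{AB}) = T_{B}(|T_{B}(\rho_{AB})|) \geq 0$, which is exactly the hypothesis of the proposition.

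This last observation is the crux of the argument, and I expect it to be the only genuinely content-bearing step: the somewhat opaque condition $T_{B}(|T_{B}(\rho_{AB})|) \geq 0$ is not an obstacle to be circumvented but is precisely the statement that the trace-norm-optimal operator $|T_{B}(\rho_{AB})|$ lands in the feasible set for $E_{\kappa}$. Once this recognition is made, no further estimation is needed, and the collapse of the entire family follows by squeezing $E_N^{\alpha}(\rho_{AB})$ between the two extremes that now coincide.
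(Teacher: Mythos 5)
Your proof is correct, and it differs from the paper's in one substantive way: the paper treats the key fact as an imported black box, citing Proposition~3 of the $\kappa$-entanglement paper (reference [WW18]) for the statement that $E_{N}(\rho_{AB})=E_{\kappa}(\rho_{AB})$ whenever $T_{B}(|T_{B}(\rho_{AB})|)\geq0$, and then performs the same sandwich you do (ordering from Proposition~\ref{prop:alpha-log-neg-to-log-neg} plus the endpoint identifications of Propositions~\ref{prop:alpha-1-log-neg} and~\ref{prop:E-kappa-E-N-max}). You instead re-derive that fact self-containedly: starting from the reformulation \eqref{eq:E-kappa-subst-proof} obtained inside the proof of Proposition~\ref{prop:E-kappa-E-N-max}, you exhibit $S_{AB}=|T_{B}(\rho_{AB})|$ as a feasible point --- the operator inequalities $-|T_{B}(\rho_{AB})|\leq T_{B}(\rho_{AB})\leq|T_{B}(\rho_{AB})|$ hold for free by the Jordan--Hahn decomposition, and the remaining constraint $T_{B}(S_{AB})\geq0$ is exactly the binegativity hypothesis --- giving $E_{\kappa}(\rho_{AB})\leq\log_{2}\Vert T_{B}(\rho_{AB})\Vert_{1}=E_{N}(\rho_{AB})$. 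This is essentially the argument underlying the cited [WW18] result, so you have not found a new phenomenon, but your version has the virtue of making the proposition's proof closed under the results of this paper (modulo the intermediate equation \eqref{eq:E-kappa-subst-proof}, whose derivation via the substitution $S_{AB}\rightarrow T_{B}(S_{AB})$ is indeed justified in the text), and it makes transparent \emph{why} the hypothesis $T_{B}(|T_{B}(\rho_{AB})|)\geq0$ is the right one: it is precisely feasibility of the trace-norm optimizer for the $E_{\kappa}$ program. The paper's route is shorter; yours is more illuminating and self-contained. One cosmetic remark: your sandwich only needs the ordering $E_{N}\leq E_{N}^{\alpha}$ and the $\alpha\rightarrow\infty$ endpoint, so your observation that Proposition~\ref{prop:alpha-1-log-neg} is not required is accurate, even though the paper lists it among its ingredients.
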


\begin{proof}
It is known from \cite[Proposition~3]{WW18} that $E_{N}(\rho_{AB})=E_{\kappa
}(\rho_{AB})$ if $\rho_{AB}$ satisfies the condition
$T_{B}(|T_{B}(\rho_{AB})|)\geq0$. Then the equality in
\eqref{eq:negativity-collapse} follows as a consequence of the ordering
inequality from Proposition~\ref{prop:alpha-log-neg-to-log-neg}, as well as
Propositions~\ref{prop:alpha-1-log-neg}\ and \ref{prop:E-kappa-E-N-max}.
\end{proof}

\begin{remark}
\label{rem:binegativity}Since all pure states \cite{AdMVW02}, two-qubit states
\cite{Ishizaka2004a}, Werner states \cite{APE03}, and bosonic Gaussian states
\cite{APE03}\ satisfy the condition $T_{B}(|T_{B}(\rho_{AB})|)\geq0$, we
conclude that the equality in \eqref{eq:negativity-collapse} holds for such states.
\end{remark}

\section{Properties of the $\alpha$-logarithmic negativity}

\label{sec:alpha-log-neg-props}Here we prove that the $\alpha$-logarithmic
negativity obeys several fundamental properties, making it an interesting
entanglement measure to consider in quantum information theory. These
properties include the following:

\begin{enumerate}
\item Entanglement monotone under selective C-PPT-P channels (which includes selective LOCC channels as a special case).

\item It can be calculated by convex optimization.

\item Normalization on maximally entangled states.

\item Faithfulness.

\item Subadditivity.
\end{enumerate}

\noindent We also prove by counterexample that the $\alpha$-logarithmic
negativities are neither convex nor monogamous.

\subsection{Entanglement monotonicity under selective completely positive partial transpose preserving channels}

Let us first recall the class of completely positive partial transpose
preserving (C-PPT-P)\ channels \cite{R99,R01}, defined as bipartite channels
$\mathcal{N}_{AB\rightarrow A^{\prime}B^{\prime}}$ such that $\mathcal{N}%
_{AB\rightarrow A^{\prime}B^{\prime}}$ is CPTP\ and the map $T_{B^{\prime}%
}\circ\mathcal{N}_{AB\rightarrow A^{\prime}B^{\prime}}\circ T_{B}$ is CP.
Related to this, a C-PPT-P quantum instrument consists of the collection
$\{\mathcal{N}_{AB\rightarrow A^{\prime}B^{\prime}}^{x}\}_{x}$, where each
$\mathcal{N}_{AB\rightarrow A^{\prime}B^{\prime}}^{x}$ is CP, the map
$T_{B^{\prime}}\circ\mathcal{N}_{AB\rightarrow A^{\prime}B^{\prime}}^{x}\circ
T_{B}$ is CP, and the sum map $\sum_{x}\mathcal{N}_{AB\rightarrow A^{\prime
}B^{\prime}}^{x}$ is TP.\ It is well known that the set of C-PPT-P\ channels
contains the set of LOCC\ channels, as well as the set of separable channels
\cite{R99,R01}.

The following fundamental theorem establishes that the $\alpha$-logarithmic
negativities are entanglement monotones for all $\alpha\geq1$:

\begin{theorem}
[Entanglement monotone]\label{thm:ent-monotone}Let $\{\mathcal{N}%
_{AB\rightarrow A^{\prime}B^{\prime}}^{x}\}_{x}$ be a C-PPT-P\ quantum
instrument, and let $\rho_{AB}$ be a bipartite state. Then the $\alpha
$-logarithmic negativity is an entanglement monotone; i.e., the following
inequality holds for all $\alpha\geq1$:%
\begin{equation}
E_{N}^{\alpha}(\rho_{AB})\geq\sum_{x:p(x)>0}p(x)E_{N}^{\alpha}(\rho
_{A^{\prime}B^{\prime}}^{x}),\label{eq:ent-monotone-thm}%
\end{equation}
where%
\begin{align}
p(x) &  \equiv\operatorname{Tr}[\mathcal{N}_{AB\rightarrow A^{\prime}%
B^{\prime}}^{x}(\rho_{AB})],\\
\rho_{A^{\prime}B^{\prime}}^{x} &  \equiv\frac{1}{p(x)}\mathcal{N}%
_{AB\rightarrow A^{\prime}B^{\prime}}^{x}(\rho_{AB}).
\end{align}

\end{theorem}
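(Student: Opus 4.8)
The plan is to reduce the statement to a single application of the data-processing inequality for $\nu_\alpha$ (Lemma~\ref{lem:DP-mu-alpha}) followed by the classical--quantum grouping inequality (Lemma~\ref{lem:mu-alpha-cq}). First I would package the whole instrument into one channel that records the outcome in a classical register $X$, namely $\mathcal{M}_{AB\rightarrow XA'B'}(\cdot)\equiv\sum_x|x\rangle\langle x|_X\otimes\mathcal{N}^x_{AB\rightarrow A'B'}(\cdot)$. Since each branch $\mathcal{N}^x$ is CP with $T_{B'}\circ\mathcal{N}^x\circ T_B$ also CP, and the sum $\sum_x\mathcal{N}^x$ is TP, the map $\mathcal{M}$ is a genuine (deterministic) C-PPT-P channel, and $\mathcal{M}(\rho_{AB})=\sum_x p(x)|x\rangle\langle x|_X\otimes\rho^x_{A'B'}$.

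Next I would introduce the map $\mathcal{P}\equiv T_{XB'}\circ\mathcal{M}\circ T_B$. It is CP because $T_X$ acts as the identity on the diagonal register, so $\mathcal{P}=\sum_x|x\rangle\langle x|_X\otimes(T_{B'}\circ\mathcal{N}^x\circ T_B)$ is a direct sum of CP maps, and it is trace-preserving because $\mathcal{M}$ is TP and $T_B$ preserves trace; in particular it is positive and trace-non-increasing as required by Lemma~\ref{lem:DP-mu-alpha}. Using $T_B\circ T_B=\operatorname{id}$, one computes $\mathcal{P}(T_B(\rho_{AB}))=\sum_x p(x)|x\rangle\langle x|_X\otimes T_{B'}(\rho^x_{A'B'})$, i.e.\ the partial transpose of the recorded output. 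Fixing a faithful $\sigma_{AB}\in\operatorname{PPT}_{\operatorname{inv}}(A:B)$, I would set $q(x)\equiv\operatorname{Tr}[\mathcal{N}^x(T_B(\sigma_{AB}))]$ and $\xi^x_{A'B'}\equiv\mathcal{N}^x(T_B(\sigma_{AB}))/q(x)$, so that $\mathcal{P}(\sigma_{AB})=\sum_x q(x)|x\rangle\langle x|_X\otimes T_{B'}(\xi^x_{A'B'})$. The key point is that each $\xi^x_{A'B'}$ is a bona fide PPT state: it is positive because $\mathcal{N}^x$ is CP and $T_B(\sigma_{AB})\geq0$, and its partial transpose is $(T_{B'}\circ\mathcal{N}^x\circ T_B)(\sigma_{AB})\geq0$ because that composite is CP; moreover $\sum_x q(x)=1$, so $q$ is a probability distribution. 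Hence $T_{B'}(\xi^x_{A'B'})$ is itself PPT and is a feasible point for the infimum defining $E_N^\alpha(\rho^x_{A'B'})$.

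I would then chain the inequalities for an arbitrary faithful $\sigma_{AB}$. Lemma~\ref{lem:DP-mu-alpha} gives $\nu_\alpha(T_B(\rho_{AB})\Vert\sigma_{AB})\geq\nu_\alpha(\mathcal{P}(T_B(\rho_{AB}))\Vert\mathcal{P}(\sigma_{AB}))$, whose right-hand side has exactly the classical--quantum form of Lemma~\ref{lem:mu-alpha-cq} with blocks $Y_B^x=T_{B'}(\rho^x_{A'B'})$ and $\sigma_B^x=T_{B'}(\xi^x_{A'B'})$. That lemma bounds it below by $\sum_{x:p(x)>0}p(x)\,\nu_\alpha(T_{B'}(\rho^x_{A'B'})\Vert T_{B'}(\xi^x_{A'B'}))+\tfrac{\alpha-1}{\alpha}D(p\Vert q)$. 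Bounding each term below by $E_N^\alpha(\rho^x_{A'B'})$ (feasibility of $T_{B'}(\xi^x_{A'B'})$) and discarding $\tfrac{\alpha-1}{\alpha}D(p\Vert q)\geq0$ (nonnegativity of relative entropy with $\alpha\geq1$) gives $\nu_\alpha(T_B(\rho_{AB})\Vert\sigma_{AB})\geq\sum_{x:p(x)>0}p(x)E_N^\alpha(\rho^x_{A'B'})$; taking the infimum over faithful $\sigma_{AB}$ via Lemma~\ref{lem:inf-invertible-states} yields \eqref{eq:ent-monotone-thm}. As a bonus, retaining the relative-entropy term proves a strictly stronger inequality.

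The main obstacle is the support/invertibility bookkeeping needed to legitimately invoke Lemmas~\ref{lem:DP-mu-alpha} and~\ref{lem:mu-alpha-cq}, which are stated for positive definite second arguments. Positive-definiteness of $\sigma_{AB}$ comes from restricting to $\operatorname{PPT}_{\operatorname{inv}}(A:B)$, and $q(x)>0$ whenever $p(x)>0$ since a nonzero CP map applied to the full-rank operator $T_B(\sigma_{AB})$ cannot vanish. The delicate point is that $T_{B'}(\xi^x_{A'B'})$ need not be invertible, as a CP map can lower rank. To handle this I would prove the support containment $\operatorname{supp}(T_{B'}(\rho^x_{A'B'}))\subseteq\operatorname{supp}(T_{B'}(\xi^x_{A'B'}))$ by a domination argument: full-rankness of $T_B(\sigma_{AB})$ gives $-\lambda\,T_B(\sigma_{AB})\leq\rho_{AB}\leq\lambda\,T_B(\sigma_{AB})$ for some $\lambda>0$, and applying the order-preserving maps $\mathcal{N}^x$ and then $T_{B'}$ propagates this sandwich to the outputs. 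This keeps every $\nu_\alpha$ finite and lets the lemmas be applied in the generalized-inverse sense (equivalently via the limit in \eqref{eq:limit-mu-alpha-support}), so that no positivity is lost along the chain.
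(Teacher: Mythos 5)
Your main chain is exactly the paper's proof, just with relabeled objects: the map you call $\mathcal{P}$ is the paper's $\mathcal{N}^{T}_{AB\rightarrow A^{\prime}B^{\prime}X}$ (classical flag tensored with the branch maps $T_{B^{\prime}}\circ\mathcal{N}^{x}\circ T_{B}$), your $T_{B^{\prime}}(\xi^{x}_{A^{\prime}B^{\prime}})$ is the paper's feasible PPT state $\sigma^{x}_{A^{\prime}B^{\prime}}$, and the sequence --- data processing (Lemma~\ref{lem:DP-mu-alpha}), classical--quantum decomposition (Lemma~\ref{lem:mu-alpha-cq}), discarding $\tfrac{\alpha-1}{\alpha}D(p\Vert q)\geq 0$, feasibility in Definition~\ref{def:alpha-log-neg}, and the infimum over $\operatorname{PPT}_{\operatorname{inv}}(A:B)$ via Lemma~\ref{lem:inf-invertible-states} --- is the paper's argument verbatim. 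In that respect the proposal is correct and essentially identical to the published proof (which, incidentally, does not even spell out the support bookkeeping you flag as the main obstacle).

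The one place you go beyond the paper, however, contains a genuine error. You propagate the sandwich $-\lambda\,T_{B}(\sigma_{AB})\leq\rho_{AB}\leq\lambda\,T_{B}(\sigma_{AB})$ through ``the order-preserving maps $\mathcal{N}^{x}$ and then $T_{B^{\prime}}$,'' but the partial transpose $T_{B^{\prime}}$ is \emph{not} order-preserving: it is not a positive map (its failure to preserve positivity is precisely the content of the PPT criterion), so from $-\lambda q(x)\xi^{x}\leq p(x)\rho^{x}\leq\lambda q(x)\xi^{x}$ you cannot conclude the corresponding inequality after applying $T_{B^{\prime}}$. The repair is immediate and stays inside your framework: since $\sigma_{AB}>0$, there is $\lambda>0$ with $-\lambda\sigma_{AB}\leq T_{B}(\rho_{AB})\leq\lambda\sigma_{AB}$, and applying the single map $T_{B^{\prime}}\circ\mathcal{N}^{x}\circ T_{B}$ --- which is CP by the C-PPT-P assumption, hence order-preserving --- gives directly
\begin{equation}
-\lambda q(x)\,T_{B^{\prime}}(\xi^{x}_{A^{\prime}B^{\prime}})\leq p(x)\,T_{B^{\prime}}(\rho^{x}_{A^{\prime}B^{\prime}})\leq\lambda q(x)\,T_{B^{\prime}}(\xi^{x}_{A^{\prime}B^{\prime}}),
\end{equation}
because $(T_{B^{\prime}}\circ\mathcal{N}^{x}\circ T_{B})(T_{B}(\rho_{AB}))=p(x)T_{B^{\prime}}(\rho^{x}_{A^{\prime}B^{\prime}})$ and $(T_{B^{\prime}}\circ\mathcal{N}^{x}\circ T_{B})(\sigma_{AB})=q(x)T_{B^{\prime}}(\xi^{x}_{A^{\prime}B^{\prime}})$. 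In other words, the sandwich must be taken around the already-transposed operator $T_{B}(\rho_{AB})$ so that no further application of $T_{B^{\prime}}$ is ever needed; this yields the support containment $\operatorname{supp}(T_{B^{\prime}}(\rho^{x}_{A^{\prime}B^{\prime}}))\subseteq\operatorname{supp}(T_{B^{\prime}}(\xi^{x}_{A^{\prime}B^{\prime}}))$ you want, and the rest of your proof then goes through unchanged.
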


\begin{proof}
Let $\sigma_{AB}$ be an arbitrary state in $\operatorname{PPT}%
_{\operatorname{inv}}(A:B)$. Let%
\begin{align}
\mathcal{N}_{AB\rightarrow A^{\prime}B^{\prime}X}^{T}(Y_{AB}) &  \equiv
\sum_{x}\mathcal{N}_{AB\rightarrow A^{\prime}B^{\prime}}^{T,x}(Y_{AB}%
)\otimes|x\rangle\langle x|_{X},\nonumber\\
\mathcal{N}_{AB\rightarrow A^{\prime}B^{\prime}}^{T,x}(Y_{AB}) &
\equiv(T_{B^{\prime}}\circ\mathcal{N}_{AB\rightarrow A^{\prime}B^{\prime}}%
^{x}\circ T_{B})(Y_{AB}).
\end{align}
Note that the map $\mathcal{N}_{AB\rightarrow A^{\prime}B^{\prime}X}^{T}$ is
completely positive and trace preserving, which is a consequence of each map
$\mathcal{N}_{AB\rightarrow A^{\prime}B^{\prime}}^{T,x}$ being CP\ and the sum
map $\sum_{x}\mathcal{N}_{AB\rightarrow A^{\prime}B^{\prime}}^{T,x}$ being TP.
Let
\begin{equation}
\sigma_{A^{\prime}B^{\prime}}^{x}\equiv\frac{1}{q(x)}\mathcal{N}%
_{AB\rightarrow A^{\prime}B^{\prime}}^{T,x}(\sigma_{AB}),
\end{equation}
where the probability distribution $\{q(x)\}_{x}$ is defined as%
\begin{equation}
q(x)\equiv\operatorname{Tr}[\mathcal{N}_{AB\rightarrow A^{\prime}B^{\prime}%
}^{T,x}(\sigma_{AB})].
\end{equation}
Note that $\sigma_{A^{\prime}B^{\prime}}^{x}\in\operatorname{PPT}(A^{\prime
}:B^{\prime})$ because%
\begin{equation}
\sigma_{A^{\prime}B^{\prime}}^{x}=\frac{1}{q(x)}\mathcal{N}_{AB\rightarrow
A^{\prime}B^{\prime}}^{T,x}(\sigma_{AB})\geq0,\label{eq:sigma-PPT-1}%
\end{equation}
since $\sigma_{AB}\geq0$ and $\mathcal{N}_{AB\rightarrow A^{\prime}B^{\prime}%
}^{T,x}$ is CP and%
\begin{equation}
T_{B^{\prime}}(\sigma_{A^{\prime}B^{\prime}}^{x})=\frac{1}{q(x)}%
\mathcal{N}_{AB\rightarrow A^{\prime}B^{\prime}}^{x}(T_{B}(\sigma_{AB}%
))\geq0,\label{eq:sigma-PPT-2}%
\end{equation}
since $T_{B}(\sigma_{AB})\geq0$ and $\mathcal{N}_{AB\rightarrow A^{\prime
}B^{\prime}}^{x}$ is CP. Also, $\operatorname{Tr}[\sigma_{A^{\prime}B^{\prime
}}^{x}]=1$ by definition. Then consider that%
\begin{align}
&  \nu_{\alpha}(T_{B}(\rho_{AB})\Vert\sigma_{AB})\nonumber\\
&  \geq\nu_{\alpha}(\mathcal{N}_{AB\rightarrow A^{\prime}B^{\prime}X}%
^{T}(T_{B}(\rho_{AB}))\Vert\mathcal{N}_{AB\rightarrow A^{\prime}B^{\prime}%
X}^{T}(\sigma_{AB}))\\
&  \geq\left(  \frac{\alpha-1}{\alpha}\right)  D(p\Vert q)+\sum_{x}%
p(x)\ \nu_{\alpha}(T_{B^{\prime}}(\rho_{A^{\prime}B^{\prime}}^{x})\Vert
\sigma_{A^{\prime}B^{\prime}}^{x})\\
&  \geq\sum_{x}p(x)\ \nu_{\alpha}(T_{B^{\prime}}(\rho_{A^{\prime}B^{\prime}%
}^{x})\Vert\sigma_{A^{\prime}B^{\prime}}^{x})\\
&  \geq\sum_{x}p(x)\ E_{N}^{\alpha}(\rho_{A^{\prime}B^{\prime}}^{x}).
\end{align}
The first inequality follows from Lemma~\ref{lem:DP-mu-alpha} (data
processing). The second inequality follows from the facts that%
\begin{align}
\mathcal{N}_{AB\rightarrow A^{\prime}B^{\prime}X}^{T}(T_{B}(\rho_{AB}))  &
=\sum_{x}p(x)|x\rangle\langle x|_{X}\otimes T_{B^{\prime}}(\rho_{A^{\prime
}B^{\prime}}^{x}),\nonumber\\
\mathcal{N}_{AB\rightarrow A^{\prime}B^{\prime}X}^{T}(\sigma_{AB})  &
=\sum_{x}q(x)|x\rangle\langle x|_{X}\otimes\sigma_{A^{\prime}B^{\prime}}^{x},
\end{align}
and by applying Lemma~\ref{lem:mu-alpha-cq}. The third inequality follows
because the classical relative entropy $D(p\Vert q)$ is non-negative.  The final inequality follows from
Definition~\ref{def:alpha-log-neg}.

Since the chain of inequalities holds for an arbitrary state $\sigma_{AB}
\in\operatorname{PPT}_{\operatorname{inv}}(A:B)$, we conclude \eqref{eq:ent-monotone-thm}.
\end{proof}

\bigskip

Note that Theorem~\ref{thm:ent-monotone} applies to the case in which the
C-PPT-P instrument consists of a single element, i.e., when the C-PPT-P
instrument is really just a {C-PPT-P} channel. We remark here that if the goal
is to establish monotonicity under the action of a C-PPT-P channel (and not
the more general case of an instrument), then the proof is slightly simpler
than above. For clarity and due to its brevity, we show the few steps
explicitly now. Let $\mathcal{N}_{AB\rightarrow A^{\prime}B^{\prime}}$ be a
C-PPT-P channel, and let $\sigma_{AB}\in\operatorname{PPT}_{\operatorname{inv}%
}(A\!:\!B)$. By definition, $\mathcal{N}_{AB\rightarrow A^{\prime}B^{\prime}%
}^{T}\equiv T_{B^{\prime}}\circ\mathcal{N}_{AB\rightarrow A^{\prime}B^{\prime
}}\circ T_{B}$ is a quantum channel, and $\mathcal{N}_{AB\rightarrow
A^{\prime}B^{\prime}}^{T}(\sigma_{AB})\in\operatorname{PPT}(A^{\prime
}\!:\!B^{\prime})$ by the same reasoning given in \eqref{eq:sigma-PPT-1} and
\eqref{eq:sigma-PPT-2}. Then consider that
\begin{align}
&  \nu_{\alpha}(T_{B}(\rho_{AB})\Vert\sigma_{AB})\nonumber\\
&  \geq\nu_{\alpha}(\mathcal{N}_{AB\rightarrow A^{\prime}B^{\prime}}^{T}%
(T_{B}(\rho_{AB}))\Vert\mathcal{N}_{AB\rightarrow A^{\prime}B^{\prime}}%
^{T}(\sigma_{AB}))\\
&  =\nu_{\alpha}(T_{B^{\prime}}(\mathcal{N}_{AB\rightarrow A^{\prime}%
B^{\prime}}(\rho_{AB}))\Vert\mathcal{N}_{AB\rightarrow A^{\prime}B^{\prime}%
}^{T}(\sigma_{AB}))\\
&  \geq E_{N}^{\alpha}(\mathcal{N}_{AB\rightarrow A^{\prime}B^{\prime}}%
(\rho_{AB})),
\end{align}
where the first inequality follows from Lemma~\ref{lem:DP-mu-alpha} and the
second from the fact that $\mathcal{N}_{AB\rightarrow A^{\prime}B^{\prime}%
}^{T}(\sigma_{AB})\in\operatorname{PPT}(A^{\prime}\!:\!B^{\prime})$ and
Definition~\ref{def:alpha-log-neg}. Since the inequalities hold for all
$\sigma_{AB}\in\operatorname{PPT}_{\operatorname{inv}}(A:B)$, we conclude
that
\begin{equation}
E_{N}^{\alpha}(\rho_{AB})\geq E_{N}^{\alpha}(\mathcal{N}_{AB\rightarrow
A^{\prime}B^{\prime}}(\rho_{AB})).
\end{equation}

\subsection{Convex optimization}

\begin{proposition}
\label{prop:convex-opt} Let $\rho_{AB}$ be a bipartite quantum state, and let
$\alpha\geq1$. Then the $\alpha$-logarithmic negativity $E_{N}^{\alpha}%
(\rho_{AB})$ can be calculated by convex optimization.
\end{proposition}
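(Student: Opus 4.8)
The plan is to exhibit $E_{N}^{\alpha}(\rho_{AB})$ as a fixed monotone scalar transformation applied to the optimal value of a genuine convex program, i.e.\ the minimization of a convex objective over a convex feasible set. The two ingredients needed are convexity of the feasible set and convexity of a suitable power of the objective, both of which are essentially already in hand.

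First I would record that the feasible set is convex. The set $\operatorname{PPT}(A:B)$ is cut out by the constraints $\sigma_{AB}\geq0$, $T_{B}(\sigma_{AB})\geq0$, and $\operatorname{Tr}[\sigma_{AB}]=1$; since $T_{B}$ is linear, these amount to two semidefinite constraints together with one affine constraint, so the set is convex (indeed a spectrahedron). Its faithful restriction $\operatorname{PPT}_{\operatorname{inv}}(A:B)$ is convex for the same reason, since strict positivity is preserved under convex combination and $T_{B}$ is linear.

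Next I would strip the outer monotone functions off the objective. Because $\log_{2}$ and $t\mapsto t^{1/\alpha}$ (for $\alpha\geq1$) are both monotone increasing on $[0,\infty)$, they commute with the infimum, so that from Definition~\ref{def:alpha-log-neg} one obtains
\begin{equation}
E_{N}^{\alpha}(\rho_{AB})=\frac{1}{\alpha}\log_{2}\left(  \inf_{\sigma
_{AB}\in\operatorname{PPT}(A:B)}\left[  \mu_{\alpha}(T_{B}(\rho_{AB}
)\Vert\sigma_{AB})\right]  ^{\alpha}\right)  .
\end{equation}
By Lemma~\ref{lem:convex-mu-alpha} the map $\sigma_{AB}\mapsto\left[
\mu_{\alpha}(T_{B}(\rho_{AB})\Vert\sigma_{AB})\right]  ^{\alpha}$ is convex, and by the previous paragraph the feasible set is convex; hence the quantity inside the logarithm is the optimal value of a convex program. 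Applying the fixed monotone post-processing $\frac{1}{\alpha}\log_{2}(\cdot)$ to this value changes neither the convexity of the problem nor its minimizer, and returns $E_{N}^{\alpha}(\rho_{AB})$. (This refines the observation that minimizing $\nu_{\alpha}$ directly is only quasi-convex, as used in the minimax step of Proposition~\ref{prop:E-kappa-E-N-max}.)

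The main point requiring care is the support and finiteness issue, since Lemma~\ref{lem:convex-mu-alpha} establishes convexity only for positive definite $\sigma$, whereas the boundary of $\operatorname{PPT}(A:B)$ contains rank-deficient operators at which $\mu_{\alpha}$ may equal $+\infty$. I would handle this by invoking Lemma~\ref{lem:inf-invertible-states} to restrict the infimum to the faithful states $\operatorname{PPT}_{\operatorname{inv}}(A:B)$; on this set every $\sigma_{AB}>0$, so the support condition $\operatorname{supp}(T_{B}(\rho_{AB}))\subseteq\operatorname{supp}(\sigma_{AB})$ holds automatically, the objective is finite and convex, and the minimization is a bona fide convex optimization whose value coincides with $E_{N}^{\alpha}(\rho_{AB})$. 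The limit relation in \eqref{eq:limit-mu-alpha-support} guarantees that nothing is lost in passing between this open feasible set and its closure.
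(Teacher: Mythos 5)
Your proof is correct and follows essentially the same route as the paper: pull the fixed monotone map $\frac{1}{\alpha}\log_{2}(\cdot)$ outside the infimum and invoke Lemma~\ref{lem:convex-mu-alpha} for convexity of $\sigma_{AB}\mapsto\left[\mu_{\alpha}(T_{B}(\rho_{AB})\Vert\sigma_{AB})\right]^{\alpha}$ over the convex set $\operatorname{PPT}(A:B)$. Your additional handling of the rank-deficient boundary states (restricting to $\operatorname{PPT}_{\operatorname{inv}}(A:B)$ via Lemma~\ref{lem:inf-invertible-states}, where positive definiteness guarantees finiteness of the objective) is a careful refinement that the paper's proof silently omits, but it does not change the substance of the argument.
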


\begin{proof}
We can rewrite $E_{N}^{\alpha}(\rho_{AB})$ as follows:%
\begin{align}
&  E_{N}^{\alpha}(\rho_{AB})\nonumber\\
&  =\inf_{\sigma_{AB}\in\operatorname{PPT}(A:B)}\log_{2}\mu_{\alpha}%
(T_{B}(\rho_{AB})\Vert\sigma_{AB})\\
&  =\inf_{\sigma_{AB}\in\operatorname{PPT}(A:B)}\frac{1}{\alpha}\log
_{2}\left[  \mu_{\alpha}(T_{B}(\rho_{AB})\Vert\sigma_{AB})\right]  ^{\alpha}\\
&  =\frac{1}{\alpha}\log_{2}\inf_{\sigma_{AB}\in\operatorname{PPT}%
(A:B)}\left[  \mu_{\alpha}(T_{B}(\rho_{AB})\Vert\sigma_{AB})\right]  ^{\alpha
}.
\end{align}
The statement of the proposition then follows as a consequence of
Lemma~\ref{lem:convex-mu-alpha}.
\end{proof}

\subsection{Normalization}

\begin{proposition}
[Normalization]For a maximally entangled state $\Phi_{AB}^{d}$ of Schmidt rank
$d\geq2$:%
\begin{equation}
\Phi_{AB}^{d}\equiv\frac{1}{d}\sum_{i,j}|i\rangle\langle j|_{A}\otimes
|i\rangle\langle j|_{B},
\end{equation}
where $\{|i\rangle_{A}\}_{i}$ and $\{|i\rangle_{B}\}_{i}$ are orthonormal
bases, the following equality holds for all $\alpha\geq1$%
\begin{equation}
E_{N}^{\alpha}(\Phi_{AB}^{d})=\log_{2}d.
\end{equation}

\end{proposition}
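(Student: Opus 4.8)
The plan is to sandwich $E_N^\alpha(\Phi_{AB}^d)$ between matching lower and upper bounds, both equal to $\log_2 d$. The common starting point is the partial transpose of the maximally entangled state. A direct calculation gives $T_B(\Phi_{AB}^d) = \frac{1}{d}F_{AB}$, where $F_{AB} \equiv \sum_{i,j}|i\rangle\langle j|_A \otimes |j\rangle\langle i|_B$ is the swap operator. Since $F_{AB}$ is both Hermitian and unitary (indeed $F_{AB}^2 = I_{AB}$), we have $|F_{AB}| = I_{AB}$; this single observation drives every norm computation below, because it yields $\|F_{AB}\|_\gamma = (\operatorname{Tr}[I_{AB}])^{1/\gamma} = d^{2/\gamma}$ for every $\gamma \in [1,\infty)$.

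For the lower bound, I would invoke the ordering inequality $E_N(\rho_{AB}) \leq E_N^\alpha(\rho_{AB})$ from Proposition~\ref{prop:alpha-log-neg-to-log-neg}, valid for all $\alpha \geq 1$, and then evaluate the ordinary logarithmic negativity: $E_N(\Phi_{AB}^d) = \log_2\|T_B(\Phi_{AB}^d)\|_1 = \log_2(\frac{1}{d}\|F_{AB}\|_1) = \log_2(\frac{1}{d}\cdot d^2) = \log_2 d$. Hence $E_N^\alpha(\Phi_{AB}^d) \geq \log_2 d$.

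For the upper bound, I would exhibit a single explicit feasible point in the optimization defining $E_N^\alpha$, namely the maximally mixed state $\sigma_{AB} = I_{AB}/d^2$, which lies in $\operatorname{PPT}(A:B)$ since it is separable and invariant under $T_B$. Because $\sigma_{AB}$ is proportional to the identity, the sandwiching factors $\sigma_{AB}^{(1-\alpha)/2\alpha} = d^{(\alpha-1)/\alpha}I_{AB}$ are scalars times $I_{AB}$, so the evaluation of $\mu_\alpha(T_B(\Phi_{AB}^d)\Vert\sigma_{AB})$ reduces to tracking scalar powers of $d$ together with $\|F_{AB}\|_\alpha = d^{2/\alpha}$. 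Carrying this through, the exponents combine as $d^{(\alpha-2)/\alpha}\cdot d^{2/\alpha} = d$, so that $\mu_\alpha(T_B(\Phi_{AB}^d)\Vert I_{AB}/d^2) = d$ and thus $\nu_\alpha = \log_2 d$, giving $E_N^\alpha(\Phi_{AB}^d) \leq \log_2 d$. Combining the two bounds yields the claimed equality.

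The argument is essentially free of obstacles; the only point demanding care is the bookkeeping of the fractional exponents in the definition of $\mu_\alpha$ when checking that they cancel to leave $d$, which is exactly where the identity $|F_{AB}| = I_{AB}$ does the work. As an alternative that sidesteps even this computation for general $\alpha$, one could note that $\Phi_{AB}^d$ is pure and hence, by Remark~\ref{rem:binegativity}, satisfies $T_B(|T_B(\rho_{AB})|) \geq 0$; the preceding collapse proposition then forces $E_N^\alpha(\Phi_{AB}^d) = E_N(\Phi_{AB}^d) = \log_2 d$ for all $\alpha \geq 1$ directly, reducing the whole task to the single computation $E_N(\Phi_{AB}^d) = \log_2 d$ above.
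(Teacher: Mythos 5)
Your proof is correct, but it takes a genuinely different route from the paper's. The paper's proof is a two-line squeeze argument: it cites the known values $E_{N}(\Phi_{AB}^{d})=E_{\kappa}(\Phi_{AB}^{d})=\log_{2}d$ from the earlier work of Wang and Wilde, then sandwiches $E_{N}^{\alpha}$ between them using the ordering inequality (Proposition~\ref{prop:alpha-log-neg-to-log-neg}) together with the identification $E_{\kappa}=E_{N}^{\max}=\lim_{\alpha\to\infty}E_{N}^{\alpha}$ (Proposition~\ref{prop:E-kappa-E-N-max}). Your lower bound coincides with the paper's (ordering plus the value of $E_{N}$, which you additionally compute from scratch via $T_{B}(\Phi_{AB}^{d})=\frac{1}{d}F_{AB}$ and $|F_{AB}|=I_{AB}$), but your upper bound replaces the appeal to $E_{\kappa}$ entirely: you exhibit the explicit feasible point $\sigma_{AB}=I_{AB}/d^{2}\in\operatorname{PPT}(A\!:\!B)$ and verify by direct exponent bookkeeping that $\mu_{\alpha}\bigl(T_{B}(\Phi_{AB}^{d})\,\Vert\, I_{AB}/d^{2}\bigr)=d^{(\alpha-2)/\alpha}\cdot d^{2/\alpha}=d$ for every finite $\alpha\geq1$ (and the same computation with $\sigma^{-1/2}=dI$ handles $\alpha=\infty$). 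This buys self-containedness — no reliance on external results about $E_{\kappa}$ — and as a bonus shows that the maximally mixed state actually attains the infimum in Definition~\ref{def:alpha-log-neg} for the maximally entangled state. What the paper's route buys is brevity and uniformity: it needs no norm computation at all, only previously established propositions. Your closing alternative (purity of $\Phi_{AB}^{d}$ implies the binegativity condition, hence the collapse $E_{N}^{\alpha}=E_{N}$ via Remark~\ref{rem:binegativity}) is also valid, though it is essentially the paper's argument repackaged, since the collapse proposition itself rests on $E_{\kappa}=E_{N}$ for such states and on Proposition~\ref{prop:E-kappa-E-N-max}.
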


\begin{proof}
This is a direct consequence of $E_{N}(\Phi_{AB}^{d})=E_{\kappa}(\Phi_{AB}%
^{d})=\log_{2}d$ \cite{WW18}, the ordering inequality in
Proposition~\ref{prop:alpha-log-neg-to-log-neg}, and
Proposition~\ref{prop:E-kappa-E-N-max}.
\end{proof}

\subsection{Faithfulness}

It is known that the logarithmic negativity is faithful, meaning that
$E_{N}(\rho_{AB})\geq0$ and $E_{N}(\rho_{AB})=0$ if and only if $\rho_{AB}%
\in\operatorname{PPT}(A:B)$. To see that $E_{N}(\rho_{AB})\geq0$, consider
that
\begin{align}
E_{N}(\rho_{AB})  &  = \log_{2} \left\Vert T_{B}(\rho_{AB}) \right\Vert _{1}\\
&  \geq\log_{2} \operatorname{Tr}[ T_{B}(\rho_{AB})]\\
&  = \log_{2} \operatorname{Tr}[ \rho_{AB}] = 0.
\end{align}
The implication $\rho_{AB}\in\operatorname{PPT}(A:B) \Rightarrow E_{N}%
(\rho_{AB})=0$ follows easily from the fact that $\left\Vert T_{B}(\rho_{AB})
\right\Vert _{1} = \operatorname{Tr}[ T_{B}(\rho_{AB})]$ for such states, and
the opposite implication was shown, e.g., in \cite[Proposition~5]{WW18}. Using
this, we can conclude faithfulness for the $\alpha$-logarithmic negativity:

\begin{proposition}
[Faithfulness]\label{prop:alpha-log-neg-faithful}Let $\rho_{AB}$ be a
bipartite quantum state, and let $\alpha\geq1$. Then $E_{N}^{\alpha}(\rho
_{AB})\geq0$ and $E_{N}^{\alpha}(\rho_{AB})=0$ if and only if $\rho_{AB}%
\in\operatorname{PPT}(A:B)$.
\end{proposition}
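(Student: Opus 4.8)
The plan is to leverage the ordering inequality from Proposition~\ref{prop:alpha-log-neg-to-log-neg}, which sandwiches $E_N^\alpha$ between the logarithmic negativity $E_N$ and the $\kappa$-entanglement $E_\kappa = E_N^{\max}$. Faithfulness of $E_N$ is already established in the text immediately preceding the statement, so the two endpoints of the family are understood. The strategy is to deduce each half of the biconditional by comparing $E_N^\alpha$ against whichever endpoint makes the argument cleanest.

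For nonnegativity, I would argue directly: since $E_N(\rho_{AB}) \geq 0$ for every bipartite state (shown just above), and since Proposition~\ref{prop:alpha-log-neg-to-log-neg} gives $E_N^\alpha(\rho_{AB}) \geq E_N(\rho_{AB})$ for all $\alpha \geq 1$, we immediately get $E_N^\alpha(\rho_{AB}) \geq E_N(\rho_{AB}) \geq 0$. This handles the first claim with no extra work.

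For the biconditional, I would split into the two implications. The forward direction ($\rho_{AB} \in \operatorname{PPT}(A:B) \Rightarrow E_N^\alpha(\rho_{AB}) = 0$) I would obtain by sandwiching: if $\rho_{AB}$ is PPT then $E_N(\rho_{AB}) = 0$ and also $E_\kappa(\rho_{AB}) = 0$ (both facts follow from faithfulness of these endpoint measures, the latter being recorded via $E_N^{\max}=E_\kappa$ in Proposition~\ref{prop:E-kappa-E-N-max} together with the known faithfulness of $\kappa$-entanglement from \cite{WW18}). By the ordering $E_N(\rho_{AB}) \leq E_N^\alpha(\rho_{AB}) \leq E_N^{\max}(\rho_{AB})$, both outer terms vanish, forcing $E_N^\alpha(\rho_{AB}) = 0$. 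For the reverse direction, I would use the contrapositive in the guise of the lower bound: if $E_N^\alpha(\rho_{AB}) = 0$, then the inequality $0 \leq E_N(\rho_{AB}) \leq E_N^\alpha(\rho_{AB}) = 0$ forces $E_N(\rho_{AB}) = 0$, and faithfulness of the logarithmic negativity (cited from \cite[Proposition~5]{WW18}) then gives $\rho_{AB} \in \operatorname{PPT}(A:B)$.

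The anticipated point requiring care is ensuring both endpoints are genuinely available as faithful measures: the $E_N \leq E_N^\alpha$ direction is purely the ordering proposition, but the upper bound $E_N^\alpha \leq E_N^{\max} = E_\kappa$ needs the identification in Proposition~\ref{prop:E-kappa-E-N-max} plus the externally-cited faithfulness of $\kappa$-entanglement. In fact, the cleanest write-up avoids the upper endpoint entirely: both directions of the biconditional can be driven off the single inequality $E_N(\rho_{AB}) \leq E_N^\alpha(\rho_{AB})$ combined with the fact that $\rho_{AB}\in\operatorname{PPT}(A:B)$ implies $\|T_B(\rho_{AB})\|_1 = \operatorname{Tr}[T_B(\rho_{AB})] = 1$, which makes $\nu_\alpha(T_B(\rho_{AB})\Vert\rho_{AB})=0$ when $\rho_{AB}>0$ and hence $E_N^\alpha = 0$ by choosing the feasible point $\sigma_{AB}=\rho_{AB}$ in the infimum. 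This gives a self-contained forward direction without invoking the $\alpha=\infty$ endpoint, so I expect that to be the main judgment call in presentation rather than any real mathematical obstacle.
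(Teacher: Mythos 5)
Your nonnegativity argument and your reverse implication are exactly the paper's: both come from the ordering $E_{N}\leq E_{N}^{\alpha}$ of Proposition~\ref{prop:alpha-log-neg-to-log-neg} together with faithfulness of $E_{N}$. The genuine problem is the forward implication in the variant you ultimately endorse as the ``cleanest write-up.'' You claim that if $\rho_{AB}\in\operatorname{PPT}(A:B)$ with $\rho_{AB}>0$, then the feasible point $\sigma_{AB}=\rho_{AB}$ gives $\nu_{\alpha}(T_{B}(\rho_{AB})\Vert\rho_{AB})=0$ because $\Vert T_{B}(\rho_{AB})\Vert_{1}=1$. This is false for $\alpha>1$. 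For a PPT state, $T_{B}(\rho_{AB})$ and $\rho_{AB}$ are both quantum states, and by \eqref{eq:sandwiched-Renyi} we have $\nu_{\alpha}(T_{B}(\rho_{AB})\Vert\rho_{AB})=\frac{\alpha-1}{\alpha}\widetilde{D}_{\alpha}(T_{B}(\rho_{AB})\Vert\rho_{AB})$, which is strictly positive whenever $T_{B}(\rho_{AB})\neq\rho_{AB}$: for $\alpha>1$ the sandwiched R\'enyi relative entropy of two states dominates the Umegaki relative entropy, which is strictly positive for distinct states. PPT states with $T_{B}(\rho_{AB})\neq\rho_{AB}$ are generic (e.g., mix $|0\rangle\langle0|_{A}\otimes|+_{y}\rangle\langle+_{y}|_{B}$, with $|+_{y}\rangle=(|0\rangle+i|1\rangle)/\sqrt{2}$, with the maximally mixed state to get a full-rank example). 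The identity $\Vert T_{B}(\rho_{AB})\Vert_{1}=1$ only controls the $\alpha=1$ quantity; it says nothing about $\nu_{\alpha}$ relative to the reference state $\rho_{AB}$. The paper's fix is to choose $\sigma_{AB}=T_{B}(\rho_{AB})$, which lies in $\operatorname{PPT}(A:B)$ exactly when $\rho_{AB}$ does; then the sandwiching powers merge with the middle operator and
\begin{align}
\nu_{\alpha}(T_{B}(\rho_{AB})\Vert T_{B}(\rho_{AB}))
&=\log_{2}\left\Vert \left(  T_{B}(\rho_{AB})\right)  ^{1/\alpha}\right\Vert
_{\alpha}\nonumber\\
&=\tfrac{1}{\alpha}\log_{2}\operatorname{Tr}[T_{B}(\rho_{AB})]=0.
\end{align}

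Your first variant of the forward direction---sandwiching $0\leq E_{N}^{\alpha}\leq E_{N}^{\max}=E_{\kappa}$ and invoking $E_{\kappa}(\rho_{AB})=0$ for PPT states---is correct, and had you committed to it, the proof would stand. It differs from the paper's route in that it imports the faithfulness of the $\kappa$-entanglement from \cite{WW18} and leans on the minimax identification $E_{N}^{\max}=E_{\kappa}$ of Proposition~\ref{prop:E-kappa-E-N-max}, whereas the paper's argument needs only the one-line feasible-point computation above and is self-contained within the proposition. Since your stated final choice is the flawed variant, the proposal as written has a genuine gap in the forward direction; replacing $\sigma_{AB}=\rho_{AB}$ by $\sigma_{AB}=T_{B}(\rho_{AB})$ (or reverting to your sandwich argument) repairs it.
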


\begin{proof}
The inequality $E_{N}^{\alpha}(\rho_{AB})\geq0$ follows from
Proposition~\ref{prop:alpha-log-neg-to-log-neg} and the fact that $E_{N}%
(\rho_{AB})\geq0$.

To see the other statement, let $\rho_{AB}\in\operatorname{PPT}(A:B)$. Then we
can pick $\sigma_{AB}=T_{B}(\rho_{AB})\in\operatorname{PPT}(A:B)$, and we find
that%
\begin{align}
&  E_{N}^{\alpha}(\rho_{AB})\nonumber\\
&  \leq\nu_{\alpha}(T_{B}(\rho_{AB})\Vert\sigma_{AB})\\
&  =\nu_{\alpha}(T_{B}(\rho_{AB})\Vert T_{B}(\rho_{AB}))\\
&  =\log_{2}\left\Vert \left(  T_{B}(\rho_{AB})\right)  ^{\frac{1-\alpha
}{2\alpha}}T_{B}(\rho_{AB})\left(  T_{B}(\rho_{AB})\right)  ^{\frac{1-\alpha
}{2\alpha}}\right\Vert _{\alpha}\\
&  =\log_{2}\left\Vert \left(  T_{B}(\rho_{AB})\right)  ^{\frac{1}{\alpha}%
}\right\Vert _{\alpha}\\
&  =0.
\end{align}
We then conclude that $E_{N}^{\alpha}(\rho_{AB})=0$ if $\rho_{AB}%
\in\operatorname{PPT}(A:B)$.

Now suppose that $E_{N}^{\alpha}(\rho_{AB})=0$. Then this means that
$E_{N}(\rho_{AB})=0$ by Proposition~\ref{prop:alpha-log-neg-to-log-neg}, and
we conclude that $\rho_{AB}\in\operatorname{PPT}(A:B)$ as a consequence of the
faithfulness of logarithmic negativity.
\end{proof}

\subsection{Subadditivity}

In this section, we establish subadditivity of the $\alpha$-logarithmic
negativity $E_{N}^{\alpha}$:

\begin{proposition}
[Subadditivity]Let $\rho_{A_{1}B_{1}}$ and $\omega_{A_{2}B_{2}}$ be bipartite
states. Then the following subadditivity inequality holds for all $\alpha
\geq1$:%
\begin{equation}
E_{N}^{\alpha}(\rho_{A_{1}B_{1}}\otimes\omega_{A_{2}B_{2}})\leq E_{N}^{\alpha
}(\rho_{A_{1}B_{1}})+E_{N}^{\alpha}(\omega_{A_{2}B_{2}}),
\label{eq:subadd-alpha-log}%
\end{equation}
where the bipartition for $E_{N}^{\alpha}(\rho_{A_{1}B_{1}}\otimes
\omega_{A_{2}B_{2}})$ is understood to be $A_{1}A_{2}|B_{1}B_{2}$.
\end{proposition}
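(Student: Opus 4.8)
The plan is to prove subadditivity by exhibiting a good product test operator. Since $E_N^\alpha$ is defined as an infimum over PPT states, the natural strategy is to take near-optimal PPT states $\sigma_{A_1B_1} \in \operatorname{PPT}_{\operatorname{inv}}(A_1:B_1)$ and $\tau_{A_2B_2} \in \operatorname{PPT}_{\operatorname{inv}}(A_2:B_2)$ for the two individual quantities, form the product $\sigma_{A_1B_1} \otimes \tau_{A_2B_2}$, and argue that this product is a feasible PPT state for the joint quantity with respect to the bipartition $A_1A_2 | B_1B_2$. I would first check feasibility: the product of two positive operators is positive, the trace is the product of traces (hence $1$), and crucially $T_{B_1B_2}(\sigma_{A_1B_1} \otimes \tau_{A_2B_2}) = T_{B_1}(\sigma_{A_1B_1}) \otimes T_{B_2}(\tau_{A_2B_2}) \geq 0$ because the partial transpose factorizes over the tensor product and each factor is PPT. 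Thus $\sigma_{A_1B_1} \otimes \tau_{A_2B_2} \in \operatorname{PPT}_{\operatorname{inv}}(A_1A_2:B_1B_2)$, and by Lemma~\ref{lem:inf-invertible-states} it is a legitimate competitor in the infimum defining $E_N^\alpha(\rho_{A_1B_1} \otimes \omega_{A_2B_2})$.

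The second and central step is to establish multiplicativity of $\mu_\alpha$ on tensor products, namely
\begin{equation}
\mu_\alpha\!\left(T_{B_1B_2}(\rho_{A_1B_1} \otimes \omega_{A_2B_2}) \,\Vert\, \sigma_{A_1B_1} \otimes \tau_{A_2B_2}\right) = \mu_\alpha\!\left(T_{B_1}(\rho_{A_1B_1}) \Vert \sigma_{A_1B_1}\right) \cdot \mu_\alpha\!\left(T_{B_2}(\omega_{A_2B_2}) \Vert \tau_{A_2B_2}\right).
\end{equation}
This reduces, after noting $T_{B_1B_2}(\rho \otimes \omega) = T_{B_1}(\rho) \otimes T_{B_2}(\omega)$ and that the sandwiching operators $(\sigma \otimes \tau)^{(1-\alpha)/2\alpha} = \sigma^{(1-\alpha)/2\alpha} \otimes \tau^{(1-\alpha)/2\alpha}$ also factorize, to the standard multiplicativity of the $\alpha$-norm under tensor products, $\left\Vert Y_1 \otimes Y_2 \right\Vert_\alpha = \left\Vert Y_1 \right\Vert_\alpha \cdot \left\Vert Y_2 \right\Vert_\alpha$. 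Taking $\log_2$ converts this product into the sum $\nu_\alpha(\cdot) + \nu_\alpha(\cdot)$.

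Combining these two steps, for any such $\sigma$ and $\tau$ we obtain
\begin{equation}
E_N^\alpha(\rho_{A_1B_1} \otimes \omega_{A_2B_2}) \leq \nu_\alpha\!\left(T_{B_1}(\rho_{A_1B_1}) \Vert \sigma_{A_1B_1}\right) + \nu_\alpha\!\left(T_{B_2}(\omega_{A_2B_2}) \Vert \tau_{A_2B_2}\right),
\end{equation}
and then taking the infimum over $\sigma_{A_1B_1}$ and independently over $\tau_{A_2B_2}$ yields the claimed bound \eqref{eq:subadd-alpha-log}. I expect no serious obstacle here; the only point requiring a little care is the support condition built into the definition of $\mu_\alpha$ — one must verify $\operatorname{supp}(T_{B_1}(\rho) \otimes T_{B_2}(\omega)) \subseteq \operatorname{supp}(\sigma \otimes \tau)$, but this follows because supports of tensor products factorize and we restrict to faithful (invertible) states $\sigma, \tau$, for which the support is the full space, so the condition holds automatically and each $\nu_\alpha$ term on the right is finite.
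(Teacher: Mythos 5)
Your proposal is correct and follows essentially the same route as the paper's proof: choose product PPT test states (whose feasibility for the bipartition $A_1A_2|B_1B_2$ follows from the factorization of the partial transpose), use additivity of $\nu_{\alpha}$ on tensor products (equivalently, multiplicativity of $\mu_{\alpha}$, which reduces to multiplicativity of the $\alpha$-norm), and then infimize over the two test states independently. Your extra care about the support condition via the restriction to $\operatorname{PPT}_{\operatorname{inv}}$ is a harmless refinement; the paper works directly with arbitrary PPT states and simply asserts the additivity of $\nu_{\alpha}$.
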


\begin{proof}
Let $\sigma_{A_{1}B_{1}}^{(1)}$ and $\sigma_{A_{2}B_{2}}^{(2)}$ be arbitrary
PPT\ states in $\operatorname{PPT}(A_{1}:B_{1})$ and $\operatorname{PPT}%
(A_{2}:B_{2})$, respectively. Then it follows that $\sigma_{A_{1}B_{1}}%
^{(1)}\otimes\sigma_{A_{2}B_{2}}^{(2)}\in\operatorname{PPT}(A_{1}A_{2}%
:B_{1}B_{2})$, so that%
\begin{align}
&  E_{N}^{\alpha}(\rho_{A_{1}B_{1}}\otimes\omega_{A_{2}B_{2}})\nonumber\\
&  \leq\nu_{\alpha}(T_{B_{1}}(\rho_{A_{1}B_{1}})\otimes T_{B_{2}}%
(\omega_{A_{2}B_{2}})\Vert\sigma_{A_{1}B_{1}}^{(1)}\otimes\sigma_{A_{2}B_{2}%
}^{(2)})\\
&  =\nu_{\alpha}(T_{B_{1}}(\rho_{A_{1}B_{1}})\Vert\sigma_{A_{1}B_{1}}%
^{(1)})+\nu_{\alpha}(T_{B_{2}}(\omega_{A_{2}B_{2}})\Vert\sigma_{A_{2}B_{2}%
}^{(2)}).
\end{align}
where we have exploited the additivity of $\nu_{\alpha}$ to establish the
equality. Since the inequality holds for arbitrary $\sigma_{A_{1}B_{1}}^{(1)}$
and $\sigma_{A_{2}B_{2}}^{(2)}$, the inequality in \eqref{eq:subadd-alpha-log}\ follows.
\end{proof}

It is not clear to us whether the opposite inequality (superadditivity)\ holds
in general. It is well known that the logarithmic negativity is additive
\cite{Vidal2002}, and it has been shown recently in \cite{WW18} that the
max-logarithmic negativity ($\kappa$-entanglement)\ is additive also. So by
Remark~\ref{rem:binegativity}, it follows that the $\alpha$-logarithmic
negativities are additive for the states mentioned there. However,
establishing additivity in general (or a counterexample) is a problem that we
leave open for future work.

\subsection{No convexity}

As a consequence of the counterexample given in \cite[Proposition~6]{WW18}\ in
addition to Remark~\ref{rem:binegativity}, it follows that the $\alpha
$-logarithmic negativity is not generally convex for any choice of $\alpha
\in\lbrack1,\infty]$. Indeed, by picking%
\begin{align}
\rho_{AB}^{1}  &  \equiv\Phi_{AB}^{2},\\
\rho_{AB}^{2}  &  \equiv\frac{1}{2}\left(  |00\rangle\langle00|_{AB}%
+|11\rangle\langle11|_{AB}\right)  ,\\
\overline{\rho}_{AB}  &  \equiv\frac{1}{2}\left(  \rho_{AB}^{1}+\rho_{AB}%
^{2}\right)  ,
\end{align}
we find for all $\alpha\in\left[  1,\infty\right]  $ that%
\begin{align}
E_{N}^{\alpha}(\rho_{AB}^{1})  &  =1,\\
E_{N}^{\alpha}(\rho_{AB}^{2})  &  =0,\\
E_{N}^{\alpha}(\overline{\rho}_{AB})  &  =\log_{2}\frac{3}{2},
\end{align}
which implies that%
\begin{equation}
E_{N}^{\alpha}(\overline{\rho}_{AB})>\frac{1}{2}\left(  E_{N}^{\alpha}%
(\rho_{AB}^{1})+E_{N}^{\alpha}(\rho_{AB}^{2})\right)  .
\end{equation}

\subsection{No monogamy}

An entanglement measure $E$ is said to be monogamous \cite{CKW00,T04,KWin04}%
\ if the following inequality holds for all tripartite states $\rho_{ABC}$:%
\begin{equation}
E(\rho_{A:B})+E(\rho_{A:C})\leq E(\rho_{A:BC}),
\end{equation}
where the bipartition is indicated by a colon.

As a consequence of the counterexample given in \cite[Proposition~7]{WW18}\ in
addition to Propositions~\ref{prop:alpha-log-neg-to-log-neg}
and~\ref{prop:E-kappa-E-N-max} and Remark~\ref{rem:binegativity}, it follows
that the $\alpha$-logarithmic negativity is not generally monogamous for any
choice of $\alpha\in\lbrack1,\infty]$. Indeed, consider the following state of
three qubits:%
\begin{equation}
|\psi\rangle_{ABC}\equiv\frac{1}{2}\left(  |000\rangle_{ABC}+|011\rangle
_{ABC}+\sqrt{2}|110\rangle_{ABC}\right)  .
\end{equation}
It was shown in \cite[Proposition~7]{WW18} that the following inequality holds%
\begin{equation}
E_{\kappa}(\psi_{A:B})+E_{\kappa}(\psi_{A:C})>E_{\kappa}(\psi_{A:BC}).
\label{eq:no-monogamy}%
\end{equation}
Since the reduced states $\psi_{AB}$ and $\psi_{AC}$ are two-qubit states, it
follows from Remark~\ref{rem:binegativity} and \eqref{eq:no-monogamy}\ for all
$\alpha\geq1$ that%
\begin{align}
E_{N}^{\alpha}(\psi_{A:B})+E_{N}^{\alpha}(\psi_{A:C})  &  >E_{\kappa}%
(\psi_{A:BC})\\
&  \geq E_{N}^{\alpha}(\psi_{A:BC}),
\end{align}
where the last inequality is a consequence of
Propositions~\ref{prop:alpha-log-neg-to-log-neg} and
\ref{prop:E-kappa-E-N-max}. So monogamy does not hold for any of the $\alpha
$-logarithmic negativities.

\section{Generalizations}

\label{sec:generalizations}

\subsection{$\alpha$-Logarithmic negativity of a quantum channel}

We can also generalize the notion of $\alpha$-logarithmic negativity from
bipartite quantum states to point-to-point quantum channels. Before doing so,
let us recall that the logarithmic negativity of a quantum channel
$\mathcal{N}_{A\rightarrow B}$ is defined as \cite{HW01}%
\begin{equation}
E_{N}(\mathcal{N})\equiv\log_{2}\left\Vert T_{B}\circ\mathcal{N}_{A\rightarrow
B}\right\Vert _{\diamond}.
\end{equation}
Recall that the diamond norm of a Hermitian preserving map $\mathcal{M}%
_{A\rightarrow B}$\ is defined as \cite{Kit97}%
\begin{equation}
\left\Vert \mathcal{M}_{A\rightarrow B}\right\Vert _{\diamond}\equiv\sup
_{\psi_{RA}}\left\Vert \mathcal{M}_{A\rightarrow B}(\psi_{RA})\right\Vert
_{1},
\end{equation}
with the optimization over pure bipartite states $\psi_{RA}$ such that the
reference system $R$ is isomorphic to the channel input system $A$. By
applying definitions, we see that we can write the logarithmic negativity of a
quantum channel as an optimized version of the logarithmic negativity of
quantum states:%
\begin{equation}
E_{N}(\mathcal{N})=\sup_{\psi_{RA}}E_{N}(\omega_{RB}),
\end{equation}
where $\omega_{RB}\equiv\mathcal{N}_{A\rightarrow B}(\psi_{RA})$. Note that
this kind of channel generalization of state measures is quite common in
quantum information theory \cite{BBCW13,TGW14b,TWW14,WFD18}.

Continuing in this spirit, we define the following:

\begin{definition}
[$\alpha$-log.~negativity of a channel]\label{def:alpha-log-neg-channel}The
$\alpha$-logarithmic negativity of a quantum channel is defined for
$\alpha\geq1$ as%
\begin{equation}
E_{N}^{\alpha}(\mathcal{N})=\sup_{\psi_{RA}}E_{N}^{\alpha}(\omega_{RB}),
\label{eq:alpha-log-neg-channel}%
\end{equation}
with $\omega_{RB}\equiv\mathcal{N}_{A\rightarrow B}(\psi_{RA})$.
\end{definition}

We could define it more generally with an optimization over mixed input states
$\psi_{RA}$ with unbounded reference system $R$. However, the maximal value is
always achieved by a pure bipartite input state with reference system $R$
isomorphic to the channel input system $A$, as a consequence of the
monotonicity inequality in Theorem~\ref{thm:ent-monotone}, the Schmidt
decomposition theorem, and the invariance of $E_{N}^{\alpha}$ with respect to
isometric channels acting on the reference system $R$ (this latter statement
itself being a consequence of Theorem~\ref{thm:ent-monotone}).

By the above observation and Remark~\ref{rem:binegativity}, it follows that%
\begin{equation}
E_{N}^{\alpha}(\mathcal{N})=E_{N}(\mathcal{N})
\label{eq:collapse-special-channels}%
\end{equation}
for all $\alpha\in\left[  1,\infty\right]  $ if $\mathcal{N}$ is qubit channel
(with qubit input and qubit output), if $\mathcal{N}$ is a unitary channel, or
if $\mathcal{N}$ is a Werner--Holevo channel \cite{WH02,FHMV04,LM15}. Recall
that a Werner--Holevo channel $\mathcal{W}_{A\rightarrow B}^{\left(
p,d\right)  }$ with parameters $p\in\left[  0,1\right]  $ and integer $d\geq2$
is defined as \cite{WH02,FHMV04,LM15}
\begin{equation}
\mathcal{W}_{A\rightarrow B}^{\left(  p,d\right)  }\equiv\left(  1-p\right)
\mathcal{W}_{A\rightarrow B}^{\left(  0,d\right)  }+p\mathcal{W}_{A\rightarrow
B}^{\left(  1,d\right)  },
\end{equation}
where the channels $\mathcal{W}_{A\rightarrow B}^{\left(  0,d\right)  }$ and
$\mathcal{W}_{A\rightarrow B}^{\left(  1,d\right)  }$ are defined as
\begin{align}
\mathcal{W}_{A\rightarrow B}^{\left(  0,d\right)  }(X_{A})  &  \equiv
\frac{\operatorname{Tr}[X_{A}]I_{B}+\operatorname{id}_{A\rightarrow B}%
(T_{A}(X_{A}))}{d+1},\\
\mathcal{W}_{A\rightarrow B}^{\left(  1,d\right)  }(X_{A})  &  \equiv
\frac{\operatorname{Tr}[X_{A}]I_{B}-\operatorname{id}_{A\rightarrow B}%
(T_{A}(X_{A}))}{d-1},
\end{align}
and $T_{A}$ denotes the partial transpose map on system $A$. Note that the
Choi state of the Werner--Holevo channel $\mathcal{W}_{A\rightarrow B}^{p,d}$
is a Werner state \cite{W89}:%
\begin{multline}
\mathcal{W}_{A\rightarrow B}^{\left(  p,d\right)  }(\Phi_{RA})\equiv\left(
1-p\right)  \frac{2}{d\left(  d+1\right)  }\Pi_{RB}^{\mathcal{S}}+\\
p\frac{2}{d\left(  d-1\right)  }\Pi_{RB}^{\mathcal{A}},
\end{multline}
where%
\begin{align}
\Pi_{RB}^{\mathcal{S}}  &  \equiv\left(  I_{RB}+F_{RB}\right)  /2,\\
\Pi_{RB}^{\mathcal{A}}  &  \equiv\left(  I_{RB}-F_{RB}\right)  /2,
\end{align}
and $F_{RB}$ is the unitary swap operator.

The claim stated after \eqref{eq:collapse-special-channels}\ follows easily
from Remark~\ref{rem:binegativity} for qubit-qubit channels, for unitary
channels because the output state is pure (and then applying
Remark~\ref{rem:binegativity}), and for Werner--Holevo channels by employing
their covariance symmetry \cite{LM15} and Theorem~\ref{thm:ent-monotone} to
conclude that the optimal state $\psi_{RA}$ in
\eqref{eq:alpha-log-neg-channel}\ is the maximally entangled state, so that
the output of the channel is a Werner state, from which we conclude the claim
by applying Remark~\ref{rem:binegativity}.\ For the Werner--Holevo channel, we
can in fact give the following equality for all $\alpha\geq1$
\cite{APE03,WW18}:%
\begin{equation}
E_{N}^{\alpha}\!\left(  \mathcal{W}_{A\rightarrow B}^{\left(  p,d\right)
}\right)  =\left\{
\begin{array}
[c]{cc}%
\log_{2}\left(  \frac{2}{d}\left[  2p-1\right]  +1\right)  & \text{if }%
p\geq\frac{1}{2}\\
0 & \text{if }p\leq\frac{1}{2}%
\end{array}
\right.  .
\end{equation}

Also, as a consequence of the ordering inequality in
Proposition~\ref{prop:alpha-log-neg-to-log-neg}\ and \cite[Theorem~27]{WW18},
the following equalities hold for all $\alpha\geq1$:%
\begin{align}
E_{N}^{\alpha}(\mathcal{L}_{\eta,N_{B}})  &  =\log_{2}\left(  \frac{1+\eta
}{\left(  1-\eta\right)  \left(  2N_{B}+1\right)  }\right)  ,\\
E_{N}^{\alpha}(\mathcal{A}_{G,N_{B}})  &  =\log_{2}\left(  \frac{G+1}{\left(
G-1\right)  \left(  2N_{B}+1\right)  }\right)  ,\\
E_{N}^{\alpha}(\mathcal{T}_{\xi})  &  =\log_{2}\left(  1/\xi\right)  ,
\end{align}
where $\mathcal{L}_{\eta,N_{B}}$ is a thermal channel with transmissivity
$\eta\in\left(  0,1\right)  $ and thermal photon number $N_{B}\in\left(
0,\eta/\left[  1-\eta\right]  \right)  $, $\mathcal{A}_{G,N_{B}}$ is an
amplifier channel with gain $G>1$ and thermal photon number $N_{B}\in\left(
0,1/\left[  G-1\right]  \right)  $, and $\mathcal{T}_{\xi}$ is an
additive-noise channel with noise variance $\xi\in\left(  0,1\right)  $. (See
\cite[Section~VII]{WW18} for more details of these channels.)

\subsection{Other positive but not completely positive maps}

We note briefly here that the main definition in this paper can be generalized
to other maps, besides the partial transpose map, that are positive but not
completely positive. Let $\mathcal{P}_{B}$ be a such a map. Then we can define
the set of free states as follows:%
\begin{equation}
\text{P}(A:B)\equiv\left\{  \sigma_{AB}:\sigma_{AB},\mathcal{P}_{B}%
(\sigma_{AB})\geq0,\operatorname{Tr}[\sigma_{AB}]=1\right\}  ,
\end{equation}
and we can define a generalized logarithmic negativity of a bipartite state
$\rho_{AB}$ as follows for $\alpha\geq1$:%
\begin{equation}
E_{P}^{\alpha}(\rho_{AB})\equiv\inf_{\sigma_{AB}\in\text{P}(A:B)}\nu_{\alpha
}(\mathcal{P}_{B}(\rho_{AB})\Vert\sigma_{AB}).
\end{equation}
The same notion can be generalized to channels as in
Definition~\ref{def:alpha-log-neg-channel}. Such a concept could be
interesting to explore further.

\subsection{Generalizations to other resource theories}

Recently, there has been a concerted effort to generalize concepts developed
in particular quantum resource theories to more general ones (see
\cite{CG18}\ for a recent review and
\cite{BG15,RKR15,Regula_2017,AHJ18,TRBLA19,TR19,LY19,LW19}\ for other papers
in this spirit). With this in mind, it seems fruitful to generalize the
concepts developed in this paper to arbitrary resource theories, beyond the
resource theory of entanglement.

To this end, consider a resource theory with a set $\mathcal{S}_{F}$\ of free
states and a set $\mathcal{O}_{F}$ of free operations (a quantum operation
being a completely positive, trace non-increasing map). Furthermore, suppose
that $\mathcal{P}$ is a Hermiticity-preserving and trace-preserving map
satisfying%
\begin{equation}
\mathcal{F}\circ\mathcal{P}=\mathcal{P}\circ\mathcal{F},
\label{eq:gen-res-theory-commuting-prop}%
\end{equation}
for all free operations $\mathcal{F\in O}_{F}$, as well as%
\begin{equation}
\mathcal{P}(\sigma)\in\mathcal{S}_{F}\quad\text{if\quad}\sigma\in
\mathcal{S}_{F}. \label{eq:faithfulness-gen-res-1st-way}%
\end{equation}
(It should be clear that the map $\mathcal{P}$ mentioned above should
generalize the partial transpose operation considered earlier.) We could
demand further that%
\begin{equation}
\sigma\in\mathcal{S}_{F}\quad\text{if\quad}\mathcal{P}(\sigma)\in
\mathcal{S}_{F}. \label{eq:for-faithfulness-gen-res}%
\end{equation}

Then for $\alpha\geq1$, we define the $\alpha$-logarithmic resourcefulness of
a state $\rho$ with respect to $\mathcal{P}$\ as%
\begin{equation}
R_{\alpha}^{\mathcal{P}}(\rho)\equiv\inf_{\sigma\in\mathcal{S}_{F}}\nu
_{\alpha}(\mathcal{P}(\rho)\Vert\sigma).
\label{eq:def-alpha-log-resourcefulness}%
\end{equation}

This measure satisfies many of the properties that the $\alpha$-logarithmic
negativity does. By the same proof given for
Proposition~\ref{prop:alpha-log-neg-to-log-neg}, we have that the ordering
property is satisfied%
\begin{equation}
R_{\alpha}^{\mathcal{P}}(\rho)\leq R_{\beta}^{\mathcal{P}}(\rho),
\label{eq:ordering-gen-res}%
\end{equation}
for all states $\rho$ and for $1\leq\alpha\leq\beta$.

We also have the following regarding faithfulness:

\begin{proposition}
Fix $\alpha\geq1$. The $\alpha$-logarithmic resourcefulness is faithful,
meaning that $R_{\alpha}^{\mathcal{P}}(\rho)\geq0$ and $R_{\alpha
}^{\mathcal{P}}(\rho)=0$ if $\rho\in\mathcal{S}_{F}$. If
\eqref{eq:for-faithfulness-gen-res} holds and $\alpha>1$, then $\rho
\in\mathcal{S}_{F}$ if $R_{\alpha}^{\mathcal{P}}(\rho)=0$.
\end{proposition}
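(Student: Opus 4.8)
The plan is to treat the three assertions in turn, following the template of the faithfulness proof for $E_N^\alpha$ (Proposition~\ref{prop:alpha-log-neg-faithful}) but replacing the negativity-specific appeal to \cite{WW18} by a self-contained argument through the sandwiched R\'enyi relative entropy. For nonnegativity, since every free state obeys $\operatorname{Tr}[\sigma]=1$, Lemma~\ref{lem:mu-alpha-to-1-ineq} gives $\nu_\alpha(\mathcal{P}(\rho)\Vert\sigma)\geq\log_2\Vert\mathcal{P}(\rho)\Vert_1$ (restricting to the support of $\sigma$ where required); trace preservation of $\mathcal{P}$ yields $\Vert\mathcal{P}(\rho)\Vert_1\geq\operatorname{Tr}[\mathcal{P}(\rho)]=\operatorname{Tr}[\rho]=1$, so each term is $\geq0$ and hence $R_\alpha^{\mathcal{P}}(\rho)\geq0$. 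Equivalently, one may invoke the ordering \eqref{eq:ordering-gen-res} together with the $\alpha=1$ value $\log_2\Vert\mathcal{P}(\rho)\Vert_1\geq0$.

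For the implication $\rho\in\mathcal{S}_F\Rightarrow R_\alpha^{\mathcal{P}}(\rho)=0$, I would invoke \eqref{eq:faithfulness-gen-res-1st-way} to obtain $\mathcal{P}(\rho)\in\mathcal{S}_F$ and then substitute $\sigma=\mathcal{P}(\rho)$ into the infimum. As $\mathcal{P}(\rho)$ is then a state, the sandwiched expression collapses to $\mu_\alpha(\mathcal{P}(\rho)\Vert\mathcal{P}(\rho))=\Vert\mathcal{P}(\rho)^{1/\alpha}\Vert_\alpha=(\operatorname{Tr}[\mathcal{P}(\rho)])^{1/\alpha}=1$, so $\nu_\alpha=0$ and hence $R_\alpha^{\mathcal{P}}(\rho)\leq0$; combined with nonnegativity this gives $R_\alpha^{\mathcal{P}}(\rho)=0$, exactly as in Proposition~\ref{prop:alpha-log-neg-faithful}.

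The substantive direction assumes \eqref{eq:for-faithfulness-gen-res}, $\alpha>1$, and $R_\alpha^{\mathcal{P}}(\rho)=0$. First I would upgrade $\mathcal{P}(\rho)$ to a state: the nonnegativity chain forces $\Vert\mathcal{P}(\rho)\Vert_1=1=\operatorname{Tr}[\mathcal{P}(\rho)]$, and a Hermitian operator whose trace norm equals its trace is positive semi-definite. Next, using $\nu_\alpha=\frac{\alpha-1}{\alpha}\widetilde{D}_\alpha$ from \eqref{eq:sandwiched-Renyi}, I rewrite the hypothesis as $\inf_{\sigma\in\mathcal{S}_F}\widetilde{D}_\alpha(\mathcal{P}(\rho)\Vert\sigma)=0$ (the prefactor $\frac{\alpha-1}{\alpha}>0$ since $\alpha>1$). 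Finally, taking a minimizing sequence $\sigma_n\in\mathcal{S}_F$ and using faithfulness of the sandwiched R\'enyi relative entropy on states --- for $\alpha>1$ one has $\widetilde{D}_\alpha\geq\widetilde{D}_1=D$ together with quantum Pinsker --- I conclude $\Vert\mathcal{P}(\rho)-\sigma_n\Vert_1\to0$, so closedness of $\mathcal{S}_F$ gives $\mathcal{P}(\rho)\in\mathcal{S}_F$, and \eqref{eq:for-faithfulness-gen-res} then yields $\rho\in\mathcal{S}_F$.

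The main obstacle is this last step: extracting $\mathcal{P}(\rho)\in\mathcal{S}_F$ from an infimum equal to zero that need not be attained a priori. The extra hypotheses are exactly what make it work --- $\alpha>1$ is essential so that a vanishing $\widetilde{D}_\alpha$ forces the two states to coincide (for $\alpha=1$ the quantity only certifies $\mathcal{P}(\rho)\geq0$, not membership in $\mathcal{S}_F$), while \eqref{eq:for-faithfulness-gen-res} is what transfers membership from $\mathcal{P}(\rho)$ back to $\rho$. The argument also relies on closedness (compactness) of $\mathcal{S}_F$, the standard assumption on free states; should the Pinsker route be undesirable, one can instead pass to a convergent subsequence of $\{\sigma_n\}$ and invoke lower semicontinuity of $\sigma\mapsto\widetilde{D}_\alpha(\mathcal{P}(\rho)\Vert\sigma)$.
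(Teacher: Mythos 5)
Your proof is correct and follows the same three-part skeleton as the paper's: nonnegativity via the $\alpha=1$ value $\log_2\Vert\mathcal{P}(\rho)\Vert_1\geq 0$ and trace preservation of $\mathcal{P}$; the free-state direction by substituting $\sigma=\mathcal{P}(\rho)$ (legitimized by \eqref{eq:faithfulness-gen-res-1st-way}) and computing $\nu_\alpha(\mathcal{P}(\rho)\Vert\mathcal{P}(\rho))=0$; and the converse by first upgrading $\mathcal{P}(\rho)$ to a genuine state --- your observation that a Hermitian operator whose trace norm equals its trace is positive semi-definite is exactly the paper's Jordan--Hahn computation, just stated compactly. The one place you genuinely diverge is the last step of the converse. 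The paper passes from $\inf_{\sigma\in\mathcal{S}_F}\widetilde{D}_\alpha(\mathcal{P}(\rho)\Vert\sigma)=0$ to $\mathcal{P}(\rho)\in\mathcal{S}_F$ by citing the equality conditions for the sandwiched R\'enyi relative entropy, which strictly apply only once a minimizer $\sigma^\star$ with $\widetilde{D}_\alpha(\mathcal{P}(\rho)\Vert\sigma^\star)=0$ is in hand; the attainment of the infimum (compactness of $\mathcal{S}_F$ plus lower semicontinuity) is left implicit. You instead take a minimizing sequence $\{\sigma_n\}$, use the monotonicity $\widetilde{D}_\alpha\geq D$ for $\alpha>1$ together with Pinsker's inequality to get $\Vert\mathcal{P}(\rho)-\sigma_n\Vert_1\to 0$, and then invoke closedness of $\mathcal{S}_F$. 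Both routes require a topological assumption on $\mathcal{S}_F$ that the paper never states (closedness for you, attainment of the infimum for the paper), but your version makes the assumption explicit and sidesteps the attainment issue entirely, so it is if anything slightly more careful than the published argument while reaching the same conclusion.
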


\begin{proof}
The proof follows along the same lines as that given for
Proposition~\ref{prop:alpha-log-neg-faithful}. To see that $R_{\alpha
}^{\mathcal{P}}(\rho)\geq0$ for an arbitrary state $\rho$, consider that%
\begin{align}
R_{\alpha}^{\mathcal{P}}(\rho)  &  \geq R_{1}^{\mathcal{P}}(\rho)\\
&  =\log_{2}\left\Vert \mathcal{P}(\rho)\right\Vert _{1}\\
&  \geq\log_{2}\operatorname{Tr}[\mathcal{P}(\rho)]\\
&  =\log_{2}\operatorname{Tr}[\rho]\\
&  =0.
\end{align}
For the first inequality, we used the ordering inequality in
\eqref{eq:ordering-gen-res}. The second inequality follows because $\left\Vert
X\right\Vert _{1}\geq\operatorname{Tr}[X]$ for an arbitrary square operator
$X$. The second equality exploits the assumption that $\mathcal{P}$ is trace preserving.

If $\rho\in\mathcal{S}_{F}$, then pick $\sigma=\mathcal{P}(\rho)\in
\mathcal{S}_{F}$ (following from \eqref{eq:faithfulness-gen-res-1st-way}), and
we find that%
\begin{align}
R_{\alpha}^{\mathcal{P}}(\rho)  &  \leq\nu_{\alpha}(\mathcal{P}(\rho
)\Vert\sigma)\\
&  =\nu_{\alpha}(\mathcal{P}(\rho)\Vert\mathcal{P}(\rho))\\
&  =\log_{2}\left\Vert \left[  \mathcal{P}(\rho)\right]  ^{1/\alpha
}\right\Vert _{\alpha}\\
&  =0.
\end{align}

If \eqref{eq:for-faithfulness-gen-res} holds and $R_{\alpha}^{\mathcal{P}%
}(\rho)=0$, then the ordering inequality in \eqref{eq:ordering-gen-res}
implies that $R_{1}^{\mathcal{P}}(\rho)=\log_{2}\left\Vert \mathcal{P}%
(\rho)\right\Vert _{1}=0$, which means that $\left\Vert \mathcal{P}%
(\rho)\right\Vert _{1}=1$. Let a Jordan--Hahn decomposition of $\mathcal{P}%
(\rho)$ be $\mathcal{P}(\rho)=P-N$ (i.e., $P,N\geq0$ and $PQ=0$). Consider
that $\operatorname{Tr}[\mathcal{P}(\rho)]=\operatorname{Tr}%
[P]-\operatorname{Tr}[N]=1$. Also, $1=\left\Vert \mathcal{P}(\rho)\right\Vert
_{1}=\operatorname{Tr}[P]+\operatorname{Tr}[N]$. Subtracting these equations
gives that $\operatorname{Tr}[N]=0\Rightarrow N=0$. Then $\mathcal{P}%
(\rho)\geq0$ and $\operatorname{Tr}[\mathcal{P}(\rho)]=1$. Since we know by
assumption that $R_{\alpha}^{\mathcal{P}}(\rho)=0$ for some $\alpha>1$ and
$\mathcal{P}(\rho)$ is a quantum state, from the relation in
\eqref{eq:sandwiched-Renyi}, the definition in
\eqref{eq:def-alpha-log-resourcefulness},\ and the equality conditions for
sandwiched R\'enyi relative entropy \cite{WWY13,Mueller2013,B13monotone}, it
follows that $\mathcal{P}(\rho)\in\mathcal{S}_{F}$, and so
\eqref{eq:for-faithfulness-gen-res} implies that $\rho\in\mathcal{S}_{F}$.
\end{proof}

\bigskip

Let $\{\mathcal{F}^{x}\}_{x}$ denote a free quantum instrument, i.e., such
that $\mathcal{F}^{x}\in\mathcal{O}_{F}$ and $\sum_{x}\mathcal{F}^{x}$ is a
quantum channel. The most prominent property of $\alpha$-logarithmic
resourcefulness is that it is monotone under the action of a free quantum
instrument, in the following sense:%
\begin{equation}
R_{\alpha}^{\mathcal{P}}(\rho)\geq\sum_{x:p(x)>0}p(x)\ R_{\alpha}%
^{\mathcal{P}}(\rho^{x}),
\end{equation}
where%
\begin{align}
p(x) &  \equiv\operatorname{Tr}[\mathcal{F}^{x}(\rho)],\\
\rho^{x} &  \equiv\frac{1}{p(x)}\mathcal{F}^{x}(\rho).
\end{align}
The proof of this inequality is nearly identical to that given for
Theorem~\ref{thm:ent-monotone}, with the main change being that we use the
defining property in \eqref{eq:gen-res-theory-commuting-prop} in the proof of
Theorem~\ref{thm:ent-monotone}.\ The proof is simpler in the case that one is
interested in establishing monotonicity under a free quantum channel rather
than a free quantum instrument, and so we detail it briefly here for clarity.
Let $\sigma\in\mathcal{S}_{F}$ and $\mathcal{F}\in\mathcal{C}_{F}$, where
$\mathcal{C}_{F}$ denotes the set of free channels. Then%
\begin{align}
\nu_{\alpha}(\mathcal{P}(\rho)\Vert\sigma) &  \geq\nu_{\alpha}((\mathcal{F}%
\circ\mathcal{P})(\rho)\Vert\mathcal{F}(\sigma))\\
&  =\nu_{\alpha}((\mathcal{P}\circ\mathcal{F})(\rho)\Vert\mathcal{F}%
(\sigma))\\
&  \geq R_{\alpha}^{\mathcal{P}}(\mathcal{F}(\rho)),
\end{align}
where the first inequality employs Lemma~\ref{lem:DP-mu-alpha} and the sole
equality employs \eqref{eq:gen-res-theory-commuting-prop}. Since the
inequality holds for all $\sigma\in\mathcal{S}_{F}$, we conclude that%
\begin{equation}
R_{\alpha}^{\mathcal{P}}(\rho)\geq R_{\alpha}^{\mathcal{P}}(\mathcal{F}%
(\rho)).
\end{equation}

We can also define the $\alpha$-logarithmic resourcefulness of a quantum
channel $\mathcal{N}$ for all $\alpha\geq1$ and with respect to $\mathcal{P}%
$\ as%
\begin{equation}
R_{\alpha}^{\mathcal{P}}(\mathcal{N})\equiv\sup_{\rho}R_{\alpha}^{\mathcal{P}%
}(\mathcal{N}(\rho)).
\end{equation}
We think it is interesting to explore particular instantiations of this
resourcefulness measure for particular resource theories, but we leave this
for future work.

\section{Conclusion}

\label{sec:conclusion}In summary, we have defined an ordered family of
$\alpha$-logarithmic negativity entanglement measures that interpolate between
the logarithmic negativity \cite{Vidal2002,Plenio2005b}\ and the $\kappa
$-entanglement \cite{WW18,WW20}.\ We proved that this family of entanglement
measures satisfies LOCC monotonicity, normalization, faithfulness,
subadditivity, and can be computed by convex optimization. The proofs of these
properties built upon prior results from \cite{B13monotone} and \cite{H16}. We
also proved that it is generally not convex nor is it monogamous. Finally, we
defined the $\alpha$-logarithmic negativity of a quantum channel as a
generalization of the measure for bipartite states, and we showed how to
generalize many of the concepts to arbitrary resource theories.

Going forward from here, we think that it is worthwhile to establish the
definition and properties of the $\alpha$-logarithmic negativity of quantum
states in the von Neumann algebra setting. The tools developed in
\cite{J18}\ should be useful for this task. Note here that we do think that it
is necessary to use the approach of \cite{J18} over that given in
\cite{BST18}, because our definition of $\alpha$-logarithmic negativity
requires working with a Hermitian operator and a positive semi-definite one.
Since \cite{J18}\ builds upon the approach given in \cite{B13monotone} for the
finite-dimensional case, we suspect that the methods of \cite{J18} should lead
to a cogent notion of $\alpha$-logarithmic negativity of quantum states in the
von Neumann algebra setting. One could also define and explore the $\alpha
$-logarithmic negativity of a quantum channel  in the von Neumann algebra
setting, generalizing Definition~\ref{def:alpha-log-neg-channel}.

As the logarithmic negativity is frequently used to study entanglement in
quantum many-body physics \cite{Calabrese2012,Lee2013,Castelnovo2013} and
holographic field theories \cite{Rangamani2014,Chaturvedi2018}, the $\alpha
$-logarithmic negativity may also shed light on these areas.

It is an interesting open question to determine whether $\alpha$-logarithmic negativity can be experimentally measured via witnessing. This is true for the logarithmic negativity \cite{FB05} (see also \cite{SHR17}) and the $\kappa$-entanglement \cite{WW20}, and so it remains open to determine the answer for $\alpha \in (1,\infty)$.

We also think that it is interesting to explore generalizations of quantum
information measures that accept Hermitian operators as input, rather than
just positive semi-definite ones, the latter being the traditional approach in
quantum information theory. The former approach has been employed fruitfully
not only here, but also in recent work that gave exact characterizations of
the one-shot distillable entanglement of a bipartite quantum state
\cite{FWTD17} and the one-shot distillable coherence of a quantum state
\cite{BFWA18}.

\textit{Note}: After a preprint of our paper appeared online as \cite{WW19},
we noticed another work \cite{CMT19} that proposed different generalizations
of the logarithmic negativity.

\begin{acknowledgments}
We are grateful to Nilanjana Datta and Gilad Gour for feedback during the Workshop \textit{Mathematical Aspects in Current Quantum Information Theory 2019}
(MAQIT~2019)
held at Seoul National University, Korea.
Part of this work was done when
XW was at the University of Maryland.
MMW acknowledges
support from the National Science Foundation under Award nos.~1350397 and 1907615.
\end{acknowledgments}

\bibliographystyle{unsrt}
\bibliography{Ref}

\end{document}